\documentclass[letterpaper]{article}
\usepackage[draft]{aaai2027}      
\usepackage{times}
\usepackage{helvet}
\usepackage{courier}
\usepackage[hyphens]{url}
\usepackage{natbib}
\usepackage{amsmath,amssymb,amsthm}
\usepackage{booktabs}
\usepackage{enumitem}
\usepackage{xcolor}

\usepackage{bm}
\usepackage[f]{esvect}
\usepackage{tikz}
\usepackage{array}
\usepackage{xparse}
\usepackage{xspace}
\usepackage{mathrsfs}
\usepackage{mathtools}
\usepackage{booktabs}
\usepackage{algorithm}
\usepackage[noend]{algpseudocode}
\usepackage[capitalise]{cleveref}

\theoremstyle{plain}
\newtheorem{theorem}{Theorem}
\newtheorem{lemma}[theorem]{Lemma}
\newtheorem{corollary}[theorem]{Corollary}
\newtheorem{proposition}[theorem]{Proposition}

\theoremstyle{definition}
\newtheorem{definition}[theorem]{Definition}
\theoremstyle{remark}

\DeclareMathOperator*{\argmin}{arg\,min}

\providecommand{\set}[1]{\ensuremath{\{ #1 \}}\xspace}

\providecommand{\setbuild}[2]{\ensuremath{\set{#1 \mid #2}}\xspace}
\providecommand{\tuple}[1]{\ensuremath{( #1 )}\xspace}

\DeclareMathSymbol{\R}{\mathord}{AMSb}{"52}

\newcommand{\transpose}{^\mathsf{T}}

\providecommand{\Expect}[2][]{\ensuremath{%
\ifthenelse{\equal{#1}{}}{\mathbb{E}}{\mathbb{E}_{#1}}%
\left[#2\right]}\xspace}

\renewcommand{\th}{^\text{th}}
\newcommand{\cond}{\, | \,}

\newcommand{\xvec}{\ensuremath{\mathbf{x}}\xspace}

\NewDocumentCommand{\xcomp}{m g}{%
  x_{#1\IfNoValueF{#2}{#2}}%
}

\newcommand{\Mmat}{\ensuremath{\mathbf{M}}\xspace}
\newcommand{\Mrow}[1]{\Mmat_{#1}}
\newcommand{\Mcol}[1]{\Mmat\transpose_{#1}}
\newcommand{\Mcomp}[2]{M_{#1#2}}

\newcommand{\wmat}{\ensuremath{\mathbf{w}}\xspace}
\newcommand{\wvec}{\ensuremath{\mathbf{w}}\xspace}
\newcommand{\wrow}[1]{\wvec_{#1}}
\newcommand{\wcol}[1]{\wvec\transpose_{#1}}

\NewDocumentCommand{\wcomp}{m g}{%
  w_{#1\IfNoValueF{#2}{#2}}%
}

\newcommand{\pvec}{\ensuremath{\mathbf{p}}\xspace}
\newcommand{\pcomp}[1]{p_{#1}}
\newcommand{\qvec}{\ensuremath{\mathbf{q}}\xspace}
\newcommand{\qcomp}[1]{q_{#1}}
\newcommand{\pq}{$(\pvec, \qvec)$}

\newcommand{\elec}{\mathcal{E}}
\newcommand{\V}{\ensuremath{V}\xspace}
\newcommand{\C}{\ensuremath{C}\xspace}
\newcommand{\prof}{{\vv{\succ}}}
\renewcommand{\succeq}{\succcurlyeq}
\newcommand{\election}{$\elec = \tuple{\V, \C, \prof}$\xspace}

\newcommand{\cons}{\sim} 
\newcommand{\cost}{\operatorname{SC}}
\newcommand{\barcost}{\overline{\operatorname{SC}}}
\newcommand{\poscost}[2]{\cost_{#1}(#2)}

\renewcommand{\top}{\operatorname{top}} 
\newcommand{\plu}{\operatorname{plu}} 
\newcommand{\uni}{\operatorname{uni}} 

\newcommand{\sdom}[1]{\mathcal{S}_{#1}} 
\newcommand{\wdom}{\mathcal{W}^{n \times m}} 
\newcommand{\widom}{\mathcal{W}^{n \times m}_{\operatorname{id}}} 
\newcommand{\wsumdom}{\mathcal{W}^{n \times m}_{\operatorname{sum}}} 
\newcommand{\wtopdom}{\mathcal{W}^{n \times m}_{\operatorname{top}}} 

\newcommand{\dist}{\operatorname{Dist}}
\newcommand{\distid}{\dist_{\operatorname{id}}}
\newcommand{\distsum}{\dist_{\operatorname{sum}}}
\newcommand{\disttop}{\dist_{\operatorname{top}}}

\newcommand{\fracveto}{\textsc{FractionalVeto}\xspace}

\newcommand{\pluveto}{\textsc{PluralityVeto}\xspace}
\newcommand{\copeland}{\textsc{Copeland}\xspace}

\newcommand{\drop}[2]{\delta_{#1#2}}
\newcommand{\range}[1]{\operatorname{range}(#1)} 

\title{Metric Distortion of Social Welfare Functions}

\author{
    Fatih Erdem Kizilkaya\textsuperscript{\rm 1}\thanks{Corresponding author ({fatih.erdem.kizilkaya@gmail.com}).}\thanks{This work was supported in part by the Ministry of National Education of the Republic of T\"urkiye.}, 
    Aaryaman Aggarwal\textsuperscript{\rm 2} \&
    Evi Micha\textsuperscript{\rm 1}
}
\affiliations{
    \textsuperscript{\rm 1} 
    University of Southern California \\
    \textsuperscript{\rm 2}
    Indian Institute of Technology Kanpur
}

\usepackage[suppress]{color-edits}
\addauthor[Fatih]{fk}{magenta}
\addauthor[Aaryaman]{aa}{cyan}
\addauthor[Evi]{em}{orange}

\begin{document}
\maketitle

\pagestyle{plain}      
\thispagestyle{plain}  

\begin{abstract}

Metric distortion has primarily been studied for social choice functions, which select a single winner from ordinal preferences.
We extend this framework to social welfare functions,  which output  
\fkreplace{a consensus ranking}{an entire ranking over a set of $m$ candidates}. 
We~associate each voter~$v$ with
a monotone weight vector $\fkedit{\wrow{v}} =  (w_{v1},\ldots,w_{vm})$ \fkedit{that specifies the relative importance of the $i\th$ position for voter~$v$,} and define the cost of a ranking as the position-weighted sum of the~voter's distances to the ranked candidates. 
This model generalizes both single-winner voting and committee selection
while allowing different positions to carry different weights. 
We consider three information regimes \fkedit{within this model}. 

First, we study the setting in which the positional weight  \linebreak vectors are known to the rule. 
A natural approach recursively \linebreak applies a single-winner rule with distortion~$\beta$ to construct the ranking one position at a time.
We show that this yields distortion at most~$3\beta$ in general. 
This gap is not merely an artifact of the analysis: we exhibit instances where making \linebreak a locally $\beta$-approximate choice at every position drives the overall distortion up to $2\beta$, 
showing that no analysis based solely on per-round guarantees can certify a bound better \linebreak than $2\beta$.
By exploiting structural properties 
specific to \textsc{FractionalVeto} of \fkreplace{Kizilkaya and Kempe}{\citet{PluralityVeto}}, we~show 
that its~recursive extension achieves the optimal  distortion of~$3$.

Second, when all voters share the same (unknown) weight vector, recursively applying any \fkreplace{single-winner rule}{social choice function} with distortion~$\beta$ achieves distortion at most $1+(\beta-1)\range{\wvec}$,
where $\range{\wvec}=(\wcomp{1}-\wcomp{m})/\wcomp{1}$ denotes the normalized range of the common weight vector~$\wvec$. 
Thus, the guarantee interpolates between distortion~$1$ for uniform weights and~$\beta$ for single-winner weights. 
Finally, we study unknown heterogeneous weights. Without further assumptions, every rule has unbounded distortion. 
We therefore consider two natural normalizations: unit-sum, where each voter distributes one unit of value across the ranking, and unit-top, where every voter assigns unit value to the first position. 
Under both models, we~\fkreplace{establish matching asymptotic upper and lower bounds, showing}{show} that the optimal distortion is~$\Theta(m)$.

\end{abstract}

\section{Introduction}

Computational social choice~\cite{BCULP:social-choice} studies how to aggregate the preferences of multiple agents into a collective decision. 
While the classical axiomatic approach \pagebreak \linebreak  compares voting rules through the normative properties they~satisfy \cite{Arr90}, a complementary quantitative \linebreak approach assumes that \fkreplace{agents}{voters} have underlying cardinal \linebreak \fkreplace{values}{utilities} for the candidates, but that only the ordinal rankings \linebreak \fkedit{that} these \fkreplace{values}{utilities}  induce can be elicited. 
The distortion of \linebreak a voting rule measures the worst-case loss in social welfare \linebreak resulting from this~limited information~\cite{procaccia:rosenschein:distortion}.

In the most general setting, voters may have arbitrary \linebreak underlying \fkreplace{cardinal values}{utilities} consistent with their ordinal rankings. \linebreak
Without further assumptions on these \fkreplace{values}{utilities}, obtaining  \linebreak meaningful approximation guarantees is not possible, \linebreak 
motivating the study of more structured preference models. 
One~such approach is the \emph{\fkreplace{utilitarian}{normalized} distortion framework}, \linebreak
which assumes arbitrary nonnegative cardinal utilities, \linebreak
normalized so that each voter's utilities sum to one~\cite{BCHLPS:utilitarian:distortion,caragiannis:procaccia:voting,ebadian:kahng:peters:shah:optimized-distortion,couvreur:bressler}. 
A more structured model that has received significant attention in  recent years is the \emph{metric distortion framework} \cite{anshelevich:bhardwaj:elkind:postl:skowron}.
In~this~model, voters and candidates are embedded in \fkreplace{an unknown}{a latent} metric space, and each voter ranks the candidates according to their distance from them. 

This \fkreplace{framework}{notion} has led to a rich line of work on the distortion \linebreak of voting rules~\cite{anshelevich:bhardwaj:postl,anshelevich:postl:randomized,DistortionDuality,DistortionCommunication}. 
The \fkreplace{single-winner setting}{case of single-winner voting} is now well understood: deterministic \linebreak
\fkreplace{ voting rules}{social choice functions} achieving the optimal distortion of~$3$ have been developed \citep{gkatzelis:halpern:shah:resolving,PluralityVeto,VetoCore}.
More recently, this line of research has been extended to multiwinner elections, \linebreak
leading to a growing understanding of the metric distortion of committee-selection rules
under a variety of voter cost functions \citep{GoelHK18,CaragiannisSV22,KalayciKK24,burkhardt2024low}.

Many applications, however, require not a single winner \linebreak or an unordered committee, but a complete ranking of \linebreak the~candidates.
Recommendation  systems aggregate rankings \linebreak induced by different criteria, 
hiring committees rank candidates so that additional offers can be made if earlier candidates decline, 
funding agencies prioritize proposals for \linebreak sequential funding,
and universities maintain ranked waitlists \linebreak for admissions.

In these settings, different positions of the ranking may carry different importance. 
For example,  users of a product ranking typically care only about the top few positions~\cite{joachims2017unbiased}.

The metric distortion of \emph{social welfare functions}, rules that return a complete ranking rather than a single winner, has, to the best of our knowledge, not been studied before. \linebreak
Addressing
this question first requires specifying how voters evaluate an entire ranking. 
In particular, different positions of the ranking may carry different importance, \fkreplace{and this importance}{which} may vary across voters. 
For example, in a recommendation system, \linebreak some users may be willing to browse several recommendations before finding a suitable item, whereas others may rely almost exclusively on the first recommendation. \linebreak
We~capture this by associating each voter $v$ with a monotone, \linebreak non-increasing weight vector ${w_{v1}\ge \cdots \ge w_{vm}\ge 0}$,
where $w_{vi}$ specifies the importance of position~$i$ to voter~$v$. 
The~cost of a ranking is then 
the sum, over all voters~$v$ and positions~$i$, of $v$'s distance to the candidate in position~$i$, weighted by~$w_{vi}$. 
This position-weighted model was introduced by \citet{benade2019low} in the  utilitarian distortion framework and subsumes both single-winner elections, 
in which every voter values only the top-ranked candidate, 
and committee selection, 
in which every voter assigns equal value to the first $k$ positions and zero value to the remaining positions. 
In this paper, we study the metric distortion of ordinal social welfare functions under this general model.

\paragraph{Contributions.}
Without any assumptions on the weights, or without revealing them to the social welfare function, bounded metric distortion is impossible. 
This motivates the~study of three natural information regimes, distinguished by what is known or assumed about the positional weights. Our results are summarized in Table~\ref{tab:results}.

\emph{1) Known Weights.}
First, we consider the setting in which the positional weight vectors are given as input to the social \linebreak welfare function. 
In this case, a natural approach is to \linebreak construct  the ranking recursively, 
filling the positions one by one by repeatedly applying a single-winner rule to the \linebreak remaining candidates. 
However, the distortion guarantee of the underlying single-winner rule does not automatically \linebreak carry over to its recursive extension.
In particular, the~best generic guarantee that we obtain for recursively applying \linebreak an  arbitrary \mbox{distortion-$\beta$} rule is a distortion of $3 \beta$,
while we~exhibit instances in which a local distortion of~$\beta$ at every recursive~step results in an~overall distortion of~$2\beta$.

Surprisingly, our main result shows that this apparent loss is not inherent.
We prove that recursively applying \fracveto, a generalization of the optimal deterministic single-winner rule \pluveto \cite{VetoCore}, yields a social welfare function with the optimal  distortion of~$3$ for every monotone weight profile.
Thus, constructing an entire ranking incurs no additional distortion over the classical single-winner setting.
Our proof exploits structural properties specific to \fracveto, \fkreplace{rather than merely its single-winner distortion guarantee}
{beyond its~single-winner distortion guarantee}. 
We also study \fkedit{Weighted} Recursive \copeland.
Although \copeland achieves distortion~$5$ in the single-winner setting,
here we~obtain \fkreplace{a distortion bound of~$7$}{a bound of~$7$}, \linebreak leaving open whether the~optimal bound of~$5$ can 
\fkreplace{still be achieved}{be achieved}.

\emph{2) Identical Weights.}
Second, we study the \emph{identical weights} setting, in which the positional weights are unknown but all voters share the same weight vector. 
We show that recursively applying any single-winner rule with distortion $\beta$ achieves distortion at most $1+(\beta-1)\range{\wvec}$, 
where $\range{\wvec}=(\wcomp{1}-\wcomp{m})/\wcomp{1}$ denotes the normalized~range of the common weight vector~$\wvec$. 
Thus, the distortion \linebreak interpolates continuously between~$1$, when all positions are valued equally, and~$\beta$, when only the first position matters. 
This~strictly generalizes previous recursive guarantees for committee selection and is nearly matched by our \fkreplace{lower bounds}{lower-bound construction}.

\emph{3) Normalized Weights.}
Finally, we consider the \emph{normalized weights} setting, in which voters may have different weight vectors but these vectors satisfy a natural normalization.
We study two such normalizations: unit-sum, where each voter distributes one unit of value across the ranking, and unit-top, where every voter assigns unit value to the~first position. 
We show that, under both models, the optimal \linebreak distortion is $\Theta(m)$: Recursive \pluveto\ achieves a~linear upper bound, and we establish matching asymptotic lower bounds.

\fkreplace{\begin{table}[t]
	\centering
	\small
	\begin{tabular}{@{}ll@{}}
		\toprule
		Setting & Distortion \\ \midrule
		Known weights  (Weighted Recursion) &
		$3\beta$ (optimal) \\
		
		Known weights  (\fracveto) &
		$3$ (optimal) \\
		
		Known weights (\copeland) &
		$5\le \mathrm{Dist}\le7$ (conj.\ $5$) \\
		
		\fkedit{Unknown} identical weights &
		$\le 1+(\beta-1)\range{\wvec}$ \\
		
		Unknown normalized weights &
		$\Theta(m)$:
		$\left[\frac{m}{2},\,8m+3\right]$ \\
		\bottomrule
	\end{tabular}
	\caption{Summary of our results. Here, $\beta$ denotes the distortion of the underlying single-winner rule, and $\range{\wvec}=(\wcomp{1}-\wcomp{m})/\wcomp{1}$ is the normalized range of the common weight vector.}
	\label{tab:results}
\end{table}}
{\begin{table}[t]
	\centering
	\small
	\resizebox{\columnwidth}{!}{%
		\begin{tabular}{@{}lll@{}}
			\toprule
			Weight Regime & Recursive rule & Distortion \\ \midrule
			Known     & Weighted $f$         & $\leq 3\beta$ \\
            Known       & \textsc{FractionalVeto}            & $3$ (optimal) \\
			    	Known        & Weighted \copeland   & $5\leq\mathrm{Dist.}\leq7$ \\
                        \addlinespace[1pt]
			Identical & $f$                & $\leq1+(\beta-1)\range{\wvec}$ \\
            \addlinespace[1pt]
			Normalized (unit-sum) & \pluveto            & $\Theta(m)$: $\left[\frac{m}{2},\,8m+3\right]$ \\
			Normalized (unit-top) & \pluveto            & $\Theta(m)$: $\left[\frac{m}{4},\,4m+1\right]$ \\

			\bottomrule
		\end{tabular}%
	}
	\caption{The summary of our results. 
	Here, $\beta$ denotes the~distortion of the underlying social choice function $f$, and
	$\range{\wvec}=(\wcomp{1}-\wcomp{m})/\wcomp{1}$ is the normalized range of the common weight vector $\wvec$.}
	\label{tab:results}
\end{table}}

\paragraph{Related Work.}

Distortion has been studied extensively under both utility-based and metric preference models, as well as under  richer~\cite{ebadian2025every, mandal2019efficient, DistortionCommunication} or more limited preference information~\cite{borodin2022distortion, ebadian2024metric, anagnostides2026sampling}, and in other settings such as matching~\cite{anari2023distortion, amanatidis2022few}.

A work particularly relevant to ours is that of \citet{PluralityVeto}, who introduced \textsc{FractionalVeto}, a fractional generalization of \pluveto. Since this rule plays a central role in our algorithms and analysis, we review it in detail in the Preliminaries. 
Our work is also closely related to the metric committee-selection setting of~\citet{GoelHK18}. Their objective corresponds to the special case of our identical-weights model in which every voter assigns equal value to the first $k$ positions and zero value thereafter. They show that recursively applying any single-winner rule with distortion~$\beta$ preserves distortion~$\beta$ in this setting. Our result strictly generalizes theirs to arbitrary monotone shared weight vectors.
\pagebreak

Another conceptually relevant work  is that of~\citet{benade2019low}, who introduced the position-weighted objective for extending distortion from social choice to social welfare functions in the utilitarian setting. They show that randomized social welfare functions can asymptotically match the distortion of the corresponding social choice problem, even when the positional weights are unknown. We study the same objective in the metric distortion framework, obtaining a fundamentally different picture: knowledge of the positional weights becomes crucial for achieving bounded distortion.


\section{Preliminaries} \label{sec:prelims}

We write vectors and matrices with bold letters.
We denote the $i\th$ entry of a vector $\xvec$ by $\xcomp i$,
and we extend this notation to sets via addition, i.e., $\xcomp T = \sum_{i \in T} \xcomp{i}{}$.
Given a matrix $\Mmat$, we denote the $i\th$ row vector by $\Mrow{i}$, 
the $j\th$ column vector by $\Mcol j$, 
and the entry in the $i\th$ row and $j\th$ column by~$\Mcomp{i}{j}$.
Given a set $X$, let $\Delta_X$ denote the set of non-negative weight vectors over $X$ that add to~1.
A \emph{ranking} $\sigma = (\sigma_1, \ldots, \sigma_k)$ 
of a set $X$ is a permutation of its elements where $\sigma_i$ occupies \emph{position}~$i$, and $\sdom{X}$ denotes the set of all such~rankings.

\paragraph{Elections.}
An election \election consists of 
a~set of $n$~\emph{voters} $\V$, a~set of $m$~\emph{candidates} $\C$ and
the \emph{rankings} $\prof=(\succ_v)_{v\in V}$ where $\succ_v \in \sdom{\C}$ expresses the ordinal preferences of voter $v$ over candidates.
We write $a \succ_v b$ to express that voter $v$ \emph{prefers} candidate~$a$ to candidate~$b$; 
we write  ${a \succeq_v b}$ if $a = b$ or $a \succ_v b$, and say that $v$ \emph{weakly~prefers} $a$ to $b$.
We write $\top(v)$ for the candidate ranked highest by voter~$v$.
The \emph{plurality score} of  a candidate $c$, denoted by $\plu(c)$, is the number of voters $v$ such that $\top(v)=c$.

A \emph{social choice function} $f$ maps each election $\elec$ to a
winning candidate $f(\elec)\in\C$.
A \emph{social welfare function} $F$ maps each election $\elec$ to a
consensus ranking $F(\elec)\in\sdom{\C}$.

Given a social choice function $f$, \emph{Recursive $f$} is the social welfare function that constructs a ranking $\sigma=(c_1,\ldots,c_m)$ successively, from left to right, as follows:
At position~$j$, let $A_j:=\C\setminus\set{c_1,\ldots,c_{j-1}}$ denote the remaining candidates, 
and, with a slight abuse of notation, let $\top_j(v)$ denote the candidate ranked highest in $A_j$ by voter~$v$.
The rule applies $f$ to the election restricted to $A_j$, places the resulting winner $c_j$ in position~$j$, and continues.
We use this notation for every recursive construction below.

\subsubsection{Metric Distortion.}

A \emph{metric} over a~set $X$ is a~function $d : X \times X \rightarrow \mathbb{R}_{\geq 0}$ with the following three properties for all $x, y, z \in X$: 
(1)~\emph{Identity:} $d(x,y) = 0$ if and only if $x = y$,\footnote{We allow voters and candidates to be co-located, so that their distance may be zero, i.e., we~technically consider a pseudometric.} 
(2)~\emph{Symmetry:} $d(x, y) = d(y, x)$, 
(3)~\emph{Triangle Inequality:} $d(x,y) + d(y,z) \geq d(x,z)$.

The key assumption in the notion of metric distortion is that preferences are induced by some metric $d$ over $\V \cup \C$, not available to the social choice (or welfare) function.
Given an election \election, we say that a metric $d$ is \emph{consistent} with $\prof$, and write $d \cons \prof$, if $d(v, c) \leq d(v, c')$ for all $v \in \V$ and $c, c' \in \C$ such that ${c \succeq_v c'}$.


The \emph{social cost of candidate $c \in \C$} under metric $d \cons \prof$ is $\cost(c \cond d):=\sum_{v\in V}d(v,c)$.
Let $c^*_d \in \argmin_{c\in C}\cost(c \cond d)$ denote an optimal candidate under $d$. 

\begin{definition}
	The \emph{distortion of candidate $c \in \C$} in election \election, denoted by $\dist_\elec(c)$,  
	is the largest possible ratio between the cost of $c$ and that of an optimal candidate over all metrics:
	\[
	  \dist_{\elec}(c):=
	  \sup_{d \cons \prof}
	  \frac{\cost(c \cond d)}{\cost(c_d^* \cond d)}.
	\]
	The \emph{distortion of social choice function $f$} is its worst-case distortion over all elections: $\dist(f):=\sup_{\elec}\dist_{\elec}\bigl(f(\elec)\bigr)$.
\end{definition}

\paragraph{Our Model.}
To extend metric distortion to social welfare functions, we associate each voter $v$ with a monotonically decreasing,  non-negative weight vector $\wvec_v \in \R^m$, where $\wcomp{v}{i}$ specifies the relative importance of the $i\th$ position for voter $v$
in the consensus ranking.

A \emph{weight profile} bundles the voters' weight vectors into 
an $n\times m$ matrix $\wmat=(\wvec_v)_{v\in V}$.
Let
\[
	\wdom := \setbuild{\wmat\in\mathbb R_{\geq0}^{n\times m}}
	{\wcomp{v}{1}\ge\cdots\geq \wcomp{v}{m} \text{ for all } v}
\] 
denote the domain of all such weight profiles. 

We then define the \emph{social cost of ranking~${\sigma \in \sdom{\C}}$}
under metric ${d \cons \prof}$ 
and weight profile $\wmat \in \wdom$ as
\[
  \cost\big(\sigma \cond d, \wmat\big)
  := \sum_{v\in V} \sum_{j=1}^m \wcomp{v}{j} d(v,\sigma_j).
\]

We~drop $d$~and~$\wmat$ from the notation whenever they are clear from  context, 
and write $\cost(\sigma)$ for $\cost(\sigma\cond d,\wmat)$.
With a slight abuse of notation, we write 
\(
  \poscost{j}{c}
  :=\cost(c\cond d,\wcol{j})
  =\sum_{v\in\V}\wcomp{v}{j}d(v,c)
\)
to denote the marginal social cost of assigning candidate $c$ to position~$j$.
Note that  the social~cost of a ranking $\sigma \in \sdom{C}$ decomposes as
\(
  \cost(\sigma)
  =\sum_{j=1}^m\poscost{j}{\sigma_j}.
\)
Also, let ${\sigma^*_{d, \wmat} \in \argmin_{\sigma \in \sdom{C}}\cost(\sigma \cond d, \wmat)}$ \linebreak
denote an optimal ranking under $d \cons \prof$ and $\wmat \in \wdom$.

\begin{definition}
	The \emph{distortion of ranking $\sigma \in \sdom{\C}$} in election \election, denoted by $\dist_\elec(\sigma)$,  
	is the largest possible ratio between the cost of $\sigma$ and that of an optimal ranking over all consistent metrics and weight profiles:
	\[
	  \dist_{\elec}(\sigma):=
	  \smashoperator{\sup_{\substack{
	    d\cons\prof\\
	    \quad \: \: \: \wmat\in\wdom
	  }}}\; 
	  \frac{\cost(\sigma\cond d,\wmat)}
	       {\cost(\sigma^*_{d,\wmat}\cond d,\wmat)}.
	\]
	The \emph{distortion of social welfare function $F$} is its worst-case distortion over all elections: $\dist(F):=\sup_{\elec}\dist_{\elec}\bigl(F(\elec)\bigr)$.
\end{definition}

Without further assumptions on the weight profile, 
every social welfare function $F$ that does not observe the weight profile has unbounded distortion.
Consider an election with two candidates $a, b$ and two voters $1, 2$ embedded on the real line. 
Candidate $a$ is co-located with voter~1 at position~0, 
while candidate $b$ is co-located with voter~2 at position~1. 
The~induced preferences are therefore ${a \succ_1 b}$ and ${b \succ_2 a}$. 
By symmetry, suppose that $F$ returns $\sigma=(a,b)$. 
Under the~weight profile with $\wrow{1}=(1,0)$ and $\wrow{2}=(x,0)$, 
the cost of $\sigma$ is $x$, whereas the optimal ranking $(b,a)$ has cost~1. 
Hence, the~distortion equals $x$ and is unbounded as $x\to\infty$.

\paragraph{Information Regimes.}
Obtaining bounded distortion \linebreak requires either giving the weight profile to the social welfare function or restricting the admissible weight profiles.
Thus, we consider three regimes below:

\emph{1) Known Weights.}
The weight profile $\wmat$ is~given to the social welfare function $F$, 
whose output may thus depend on both inputs, and is denoted by $F(\elec,\wmat)$.
Then, its distortion is defined accordingly as
\(
  \dist(F) := \sup_{\elec, \wmat \in \wdom}
  \dist_{\elec, \wmat}\bigl(F(\elec,\wmat) \bigr)
\)
where
\[
  \dist_{\elec, \wmat}(\sigma ) := \sup_{d\cons\prof}
  \frac{\cost(\sigma\cond d,\wmat)}
       {\cost(\sigma^*_{d,\wmat}\cond d,\wmat)}.
\]

\emph{2) Identical Weights.}
The weight profile is not given but all voters share the same weight vector.
For this regime, let 
\[
  \widom := \setbuild{\wmat\in\wdom}
  {{\wrow{u}=\wrow{v}}\text{ for all }u,v}
\]
denote the domain of admissible weight profiles,
and define distortion as
\(
  \distid(F) := \sup_{\elec, \wmat \in \widom} 
  \dist_{\elec, \wmat}\bigl(F(\elec)\bigr)
\).

\emph{3) Normalized Weights.}
The weight profile is not given but weight vectors are normalized.
We consider two types of normalizations: \emph{unit-sum} and \emph{unit-top}, 
for which the domain of admissible weight profiles and distortion, is defined as follows, respectively: 
\begin{itemize}
	\item[--] Let
	\(
		\wsumdom := \setbuild{\wmat\in\wdom}
		{\wrow{v} \in \Delta_m \text{ for all }v}
	\)
	and 
	\(
	    \distsum(F) :=
	    \sup_{\elec, \wmat\in\wsumdom}
	    \dist_{\elec,\wmat}\bigl(F(\elec)\bigr)
	\)
	\item[--] Let
	\(
		\wtopdom :=
		\setbuild{\wmat\in\wdom}
		{\wcomp{v}{1}=1\text{ for all }v}
	\)	
	and 
	\(
	    \disttop(F) :=
	    \sup_{\elec, \wmat\in\wtopdom}
	    \dist_{\elec,\wmat}\bigl(F(\elec)\bigr)
	\).
\end{itemize}

Lastly, we review concepts from the single-winner setting that will be useful also in our rank aggregation setting.

\subsubsection{Single-Winner Review.}

Given an election \election, the~\emph{domination graph} of a candidate $a$ is the bipartite graph $G_a=(\V,\C,E_a)$ in which
$(v,c)\in E_a$ if and only if~$a\succeq_v c$.
Given normalized weight vectors $\pvec\in\Delta_{\V}$ and $\qvec\in\Delta_{\C}$, 
a~non-negative matrix $\Mmat \in\R_{\geq0}^{\V\times\C}$ is a \emph{$(\pvec,\qvec)$-matching} if
\fkreplace{\[
  \sum_{c\in\C}\Mcomp{v}{c} = \pcomp{v}
  \quad\text{and}\quad
  \sum_{v\in\V}\Mcomp{v}{c} = \qcomp{c}
\]
for every $v\in\V$ and $c\in\C$.}
{
  ${\sum_{c\in\C}\Mcomp{v}{c} = \pcomp{v}}$ for all voters $v\in\V$ and
  ${\sum_{v\in\V}\Mcomp{v}{c} = \qcomp{c}}$ for all candidates  $c\in\C$
}
We say that candidate $a$ \emph{admits} the~\pq-matching $\Mmat$ if $\Mcomp{v}{c}>0$ only if $(v,c)\in E_a$.

\begin{lemma}[\citealp{PluralityVeto}, Theorem 2] \label{lem:fracveto}
	Given an election \election along with ${\pvec\in\Delta_{\V}}$ and ${\qvec\in\Delta_{\C}}$, 
	\fracveto of \citet{PluralityVeto} returns a candidate admitting a \pq-matching.
\end{lemma}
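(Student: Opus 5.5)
The plan is to prove the lemma constructively, by analyzing the elimination process that \fracveto runs. A candidate $c$ starts with score $\qcomp{c}$. Voters are processed one at a time in an arbitrary order, and each voter $v$ spends its budget $\pcomp{v}$ by continuously decrementing the score of its least-preferred candidate among those whose score is still positive. The rule returns the candidate whose score hits zero last, namely the candidate that absorbs the final decrement. Since $\sum_v \pcomp{v} = 1 = \sum_c \qcomp{c}$, all scores reach exactly zero at the end, so the process is well defined.

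I would build the matching $\Mmat$ by recording, for each voter $v$ and candidate $c$, the amount $\Mcomp{v}{c}$ of $v$'s budget that went into decrementing $c$. Both marginal conditions follow from bookkeeping. Each voter spends exactly $\pcomp{v}$, so $\sum_c \Mcomp{v}{c} = \pcomp{v}$. Each candidate is decremented from $\qcomp{c}$ down to $0$, so $\sum_v \Mcomp{v}{c} = \qcomp{c}$. This uses total conservation: since the totals are equal, no budget is left over and no score is left positive.

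The remaining step, and the heart of the argument, is admissibility: $\Mcomp{v}{c}>0$ must imply $a \succeq_v c$, where $a$ is the returned candidate. I would argue as follows. When $v$ decrements $c$, $c$ is the bottom choice of $v$ among the candidates with positive score at that moment. The winner $a$ is eliminated last, so its score is still positive at every earlier time, and it is also positive at the final instant, where $a$ is the candidate being decremented. Therefore, at the moment $v$ decrements $c$, either $c = a$ or $a$ is still alive. In the second case, $c$ being $v$'s least-preferred alive candidate gives $a \succ_v c$. Either way $(v,c)\in E_a$.

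The main obstacle is handling the continuity and tie issues carefully. A voter's budget may be split across several candidates as scores hit zero during its turn. I also need to fix the convention for which candidate counts as the one that ``dies last,'' so that $a$ is genuinely alive at every decrement moment except the ones applied to $a$ itself. I would make this precise by discretizing the process into maximal intervals during which a single voter decrements a single candidate. The winner is the candidate receiving the last interval, and I would check the claim interval by interval.
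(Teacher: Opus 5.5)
Your proof is correct and is essentially the standard argument behind this result, which the paper does not reprove but cites from \citet{PluralityVeto}. You record each decrement as mass in $\Mmat$, read off both marginals from the invariant that total residual vote equals total residual score (so everything is exhausted at termination), and get admissibility because the returned candidate $a$ still has positive residual score at the start of every step, so any $c$ decremented by $v$ satisfies $a \succeq_v c$.

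The paper's description of \fracveto lets any voter with positive residual vote act at each step, rather than exhausting voters one at a time. Your admissibility argument never uses the processing order, though, so it covers that version too. Also, since every step exhausts a voter or a candidate, the process is already discrete with at most $n+m$ steps, and the continuity and tie issues you anticipate do not actually arise.
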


\fracveto initializes the residual vote of each voter $v$ to $\pcomp{v}$ and the residual score of each candidate $c$~to~$\qcomp{c}$.
While some voter has positive residual vote, it chooses any~such voter, finds that voter's least-preferred candidate among those with positive residual score, and decreases both residuals by their minimum.
It returns the last candidate whose residual score reaches zero.

A $(\pvec^{\mathrm{uni}},\qvec^{\mathrm{plu}})$-matching is called a \emph{plurality matching},
where $\pcomp{v}^{\uni}=1/n$ for all ${v\in\V}$, and
$\qcomp{c}^{\plu}=\plu(c)/n$ for all ${c\in\C}$.
Equivalently, candidate $a$ admits a plurality matching if there is a bijection $M : \V \to \V$ such that $a \succeq_v \top(M(v))$ for every $v\in\V$.
With~these~weights, \fracveto \linebreak specializes to \pluveto, 
and achieves distortion 3, the lowest possible distortion that any social choice function can have \cite{anshelevich:bhardwaj:elkind:postl:skowron}.


\section{Known Weights}\label{sec:known-weights}

In this section, we consider the first information regime, where the weight profile $\wmat$ is given as input to the \linebreak social welfare function. 
At position~$j$, the entries of $\wcol{j}$ can be therefore viewed as voter masses in a weighted single-winner election over the remaining candidates.
This suggests constructing the ranking by recursively applying  a weighted extension of a social choice function.

The difficulty lies in composing these single-winner calls. If $\sigma^*=(c_1^*,\ldots,c_m^*)$ is a globally optimal ranking, then $c_j^*$ may already have been selected when position~$j$ is reached. 
Thus, a guarantee relative to the best available candidate is only local and need not be preserved by recursion.

We first study this problem in a black-box manner. 
If $f$ has distortion at most $\beta$, we consider its natural clone-weighted recursive extension  and show that it achieves distortion at most $3\beta$. 
We also show the limitation of using only the per-round guarantee: 
choosing a $\beta$-approximate candidate at every position can produce
a ranking whose cost is $2\beta$ times optimal.

This naturally raises the question whether producing \linebreak an~entire ranking requires distortion higher than $3$ which is the optimal single-winner bound.\footnote{This bound remains a lower bound in our setting, since setting $\wrow{v}=(1,0,\ldots,0)$ for every voter $v$ reduces the objective to the single-winner problem.}
It does not: exploiting \linebreak the matching certificates underlying \fracveto, 
we~show that Recursive \fracveto achieves the \linebreak optimal distortion of $3$.
Moreover, by operating directly on normalized masses, the rule runs in
polynomial time, whereas the black-box extension may require exponential
time because its explicit clone population can be exponential in the
bit length of the weights.

Finally, we study the analogous question for \copeland.
While its single-winner distortion is~$5$, \emph{Weighted Recursive} \linebreak \copeland requires a substantially different analysis, for which we obtain a distortion bound of~7. 
However, whether a~bound of $5$ is actually achievable, remains an open question.


\subsection{General Weighted Recursion}


We first formalize the \emph{clone-weighted}   extension of a social choice function.
For a rational voter-mass vector $\pvec\in\Delta_{\V}$, 
let $f_\pvec(\elec)$ be the outcome of $f$ after replacing 
every voter~$v$ by $L\pcomp{v}$ identical copies, 
where $L$ is the smallest positive \linebreak integer for which all multiplicities are integral.

Given a known weight profile $\wmat$, 
\emph{Weighted Recursive $f$} constructs a ranking $\sigma_f=(c_1,\ldots,c_m)$ from left to right. 
At~position~$j$, let $W_j:=\sum_{v\in\V}\wcomp{v}{j}$. If $W_j>0$, define
\[
  \pcomp{v}^{(j)}:=\frac{\wcomp{v}{j}}{W_j} \qquad \text{for all } v \in \V.
\]
The rule then applies $f_{\pvec^{(j)}}$ to the election restricted to $A_j$,  
places the resulting winner $c_j$ in position~$j$, and continues with the remaining candidates.
If $W_j=0$, monotonicity \linebreak implies that all remaining positions have zero weight, 
so the~remaining candidates are appended arbitrarily. 
We use this same convention for every recursive rule in this section.

For the remainder of this section, 
fix an arbitrary election \election,
a metric $d\cons\prof$, 
and a monotone weight \linebreak profile $\wmat\in\wdom$. 
The following rule-independent lemma will be used throughout this section.

\begin{lemma}\label{lem:top-cost}
For every ranking $\sigma^*=(c_1^*,\ldots,c_m^*) \in \sdom{\C}$,
\[
  \sum_{j=1}^m
  \sum_{v\in\V}\wcomp{v}{j}d(v,\top_j(v))
  \leq \cost(\sigma^*)
\]
\end{lemma}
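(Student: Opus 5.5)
The plan is to prove the inequality voter by voter. For each fixed $v\in\V$ I will show
\[
  \sum_{j=1}^m \wcomp{v}{j}\, d(v,\top_j(v)) \;\le\; \sum_{j=1}^m \wcomp{v}{j}\, d(v,c_j^*),
\]
and then sum over $v$. Since $\cost(\sigma^*)=\sum_{v}\sum_j \wcomp{v}{j} d(v,c^*_j)$, this gives the lemma. The per-voter bound rests on two comparisons with an intermediate quantity: the sorted distance profile of $v$.

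Fix $v$ and order the candidates by $v$'s preference, so that $d(v,a_1)\le d(v,a_2)\le\cdots\le d(v,a_m)$; this is possible because $d\cons\prof$. Write $D_{v,(j)}:=d(v,a_j)$ for the $j\th$ smallest distance from $v$ to a candidate.

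\emph{First step.} I claim $d(v,\top_j(v))\le D_{v,(j)}$ for every $j$. The set $A_j$ is obtained by removing only $j-1$ candidates, namely $c_1,\ldots,c_{j-1}$ chosen by the recursive rule. Hence at least one of $a_1,\ldots,a_j$ still lies in $A_j$. Since $\top_j(v)$ is $v$'s favorite candidate in $A_j$, consistency gives $d(v,\top_j(v))\le d(v,a_i)\le D_{v,(j)}$ for that surviving $a_i$ with $i\le j$. Multiplying by $\wcomp{v}{j}\ge 0$ and summing over $j$ gives $\sum_j \wcomp{v}{j} d(v,\top_j(v))\le \sum_j \wcomp{v}{j} D_{v,(j)}$.

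\emph{Second step.} I will show $\sum_j \wcomp{v}{j} D_{v,(j)}\le \sum_j \wcomp{v}{j} d(v,c_j^*)$. This is the rearrangement inequality, and it is where monotonicity of $\wrow{v}$ is used. The multiset $\{d(v,c^*_j)\}_j$ equals $\{D_{v,(j)}\}_j$, because $\sigma^*$ is a permutation of $\C$. Pairing the non-increasing weights $\wcomp{v}{1}\ge\cdots\ge\wcomp{v}{m}$ with the distances in non-decreasing order minimizes the weighted sum over all permutations.

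If a self-contained argument is preferred, I will use Abel summation. Setting $\wcomp{v}{m+1}:=0$, write $\wcomp{v}{j}=\sum_{k\ge j}(\wcomp{v}{k}-\wcomp{v}{k+1})$. Both sides then become nonnegative combinations of prefix sums $\sum_{j\le k}(\cdot)$. The prefix sum of the sorted distances is at most that of any $k$ distinct candidates' distances.

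Neither step is hard. The only point needing care is the first step's counting: only $j-1$ candidates are removed regardless of which rule chose them. This is exactly what makes the lemma rule-independent.
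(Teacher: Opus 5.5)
Your proposal is correct and follows essentially the paper's argument. Both use the same per-voter reduction and the same counting fact: only $j-1$ candidates have been removed, so $d(v,\top_j(v))$ is at most $v$'s $j$-th smallest distance. Both then invoke monotonicity of $\wrow{v}$ through summation by parts. The only difference is packaging: the paper folds both comparisons into prefix-sum inequalities before applying Abel summation, whereas you multiply the pointwise bound by $\wcomp{v}{j}\ge 0$ directly and then apply the rearrangement inequality, which is equivalent.
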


\begin{proof}
	Fix a voter $v$, and let $d_v^{(1)}\leq\cdots\leq d_v^{(m)}$ denote the~distances from $v$ to the candidates in increasing order.
	Before position $j$, only $j-1$ candidates have been removed, so at least one of the $j$ closest candidates to $v$ is available.
	Therefore, $d(v,\top_j(v))\leq d_v^{(j)}$.
	Moreover, since $c_1^*,\ldots,c_k^*$ are distinct candidates, for every
	$k\in\{1,\ldots,m\}$,
	\[
	  \sum_{i=1}^{k}d(v,\top_i(v))
	  \leq
	  \sum_{i=1}^{k}d_v^{(i)}
	  \leq
	  \sum_{i=1}^{k}d(v,c_i^*).
	\]
	
	Set $\wcomp{v}{m+1}:=0$ and define $\drop{v}{\ell} :=\wcomp{v}{\ell}-\wcomp{v}{\ell+1}\ge0$. \linebreak
	Using the above bound and summation by parts, we obtain
	\begin{align*}
	  \sum_{j=1}^m \wcomp{v}{j}d(v,\top_j(v))
	  &= \sum_{\ell=1}^m \drop{v}{\ell} \sum_{j=1}^{\ell}d(v,\top_j(v))\\
	  &\le \sum_{\ell=1}^m \drop{v}{\ell} \sum_{j=1}^{\ell}d(v,c_j^*)\\
	  &= \sum_{j=1}^m \wcomp{v}{j}d(v,c_j^*).
	\end{align*}
	
	Summing this inequality over all voters and exchanging the order of summation completes the proof.
\end{proof}

\begin{theorem}\label{thm:generic-recursion}
  If social choice function $f$ has distortion at most $\beta$, then Weighted Recursive
  $f$ has distortion at most~$3\beta$.
\end{theorem}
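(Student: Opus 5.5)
The plan is to combine a per-position guarantee inherited from $f$ with a triangle-inequality comparison against the optimal ranking $\sigma^*=(c_1^*,\ldots,c_m^*)$. Then \cref{lem:top-cost} absorbs the extra terms.

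\emph{Step 1 (local guarantee).} Fix a position $j$ with $W_j>0$. I would argue that
\[
  \poscost{j}{c_j}\le\beta\min_{x\in A_j}\poscost{j}{x}.
\]
The clone election behind $f_{\pvec^{(j)}}$ has each voter $v$ replaced by $L\pcomp{v}^{(j)}$ copies. Extend $d$ by co-locating each copy with $v$. This extension is a consistent (pseudo)metric on the clone election restricted to $A_j$. Its single-winner cost of any $x$ equals $(L/W_j)\poscost{j}{x}$. The distortion bound for $f$ therefore transfers directly. Positions with $W_j=0$ contribute zero to both sides and can be ignored.

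\emph{Step 2 (comparison candidate).} This is the main step. The obstacle is that $c_j^*$ may already have been removed, so I need an available proxy. Let $x_j\in A_j$ be the available candidate closest to $c_j^*$; if $c_j^*\in A_j$, then $x_j=c_j^*$. For every voter $v$, we have $\top_j(v)\in A_j$, so
\[
  d(c_j^*,x_j)\le d(c_j^*,\top_j(v))\le d(v,c_j^*)+d(v,\top_j(v)).
\]
Applying the triangle inequality once more gives
\[
  d(v,x_j)\le d(v,c_j^*)+d(c_j^*,x_j)\le 2d(v,c_j^*)+d(v,\top_j(v)).
\]
Multiplying by $\wcomp{v}{j}$ and summing over voters yields
\[
  \min_{x\in A_j}\poscost{j}{x}\le\poscost{j}{x_j}\le 2\poscost{j}{c_j^*}+\sum_{v\in\V}\wcomp{v}{j}d(v,\top_j(v)).
\]

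\emph{Step 3 (summing).} Combine Step 1 with Step 2 and sum over $j$. Use $\sum_j\poscost{j}{c_j^*}=\cost(\sigma^*)$, and bound the second term by \cref{lem:top-cost}, which gives at most $\cost(\sigma^*)$. The result is $\cost(\sigma_f)\le\beta(2+1)\cost(\sigma^*)=3\beta\,\cost(\sigma^*)$ for every consistent $d$, which proves the theorem.

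The only delicate point I foresee is Step 1: checking that the clone reduction preserves consistency and rational weights (the weights are given as rationals). Step 2 is the conceptual crux, but the closest-available-candidate trick seems to handle it cleanly.
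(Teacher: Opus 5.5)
Your proof is correct and has the same three-part structure as the paper's: a local $\beta$-guarantee via cloning, a bound on $\min_{c\in A_j}\poscost{j}{c}$ through an available proxy, and then Lemma~\ref{lem:top-cost}. The only real difference is the proxy in Step~2.

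The paper bounds the minimum by the $\wcol{j}$-weighted average of $\poscost{j}{\top_j(u)}$ over voters $u$. It then applies the three-term inequality $d(u,\top_j(v))\le d(u,c_j^*)+d(c_j^*,v)+d(v,\top_j(v))$ across pairs of voters, so its comparison holds only after averaging (and dividing by $W_j$).

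You instead fix one deterministic proxy, the available candidate nearest to $c_j^*$. This gives the pointwise inequality $d(v,x_j)\le 2d(v,c_j^*)+d(v,\top_j(v))$ for every voter, with no averaging and no division by $W_j$. Both routes reach the same intermediate bound $2\poscost{j}{c_j^*}+\sum_{v}\wcomp{v}{j}d(v,\top_j(v))$, and hence the same constant $3\beta$.

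The paper's averaging over voters' top choices matches the top-choice masses that appear in the \fracveto\ analysis, and it reuses that template for \copeland. Your nearest-candidate proxy would work there too: taking $b=x_j$ and $x=c_j^*$ in Corollary~\ref{cor:copeland_metric_point} gives the same bound of~$7$.
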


\begin{proof}
	Let $\sigma_f =(c_1,\ldots,c_m)$ be the ranking returned by Weighted Recursive $f$, 
	and let $\sigma^*=(c_1^*,\ldots,c_m^*)$ be \linebreak an optimal ranking under $d$ and
	$\wmat$.
	
	When $W_j>0$, the cloned election has candidate costs proportional
	to $\poscost{j}{\cdot}$.
	Therefore, the distortion guarantee of $f$ gives
	\begin{equation} \label{generic:eq1}
		\poscost{j}{c_j} \leq \beta\min_{c\in A_j} \poscost{j}{c}.
	\end{equation}

	Since $\top_j(u)\in A_j$ for every voter $u$,
	\begin{equation} \label{generic:eq2}
		\min_{c\in A_j}\poscost{j}{c}
		\leq\frac1{W_j}\sum_{u\in\V} \wcomp{u}{j} \poscost{j}{\top_j(u)}.
	\end{equation}
	
	For $u,v\in\V$,  the triangle inequality  gives
	\[
		d(u,\top_j(v))
		\leq d(u,c_j^*)+d(c_j^*,v)+d(v,\top_j(v)).
	\]
	Thus, fixing $v$, multiplying by $\wcomp{u}{j}$, and summing over all voters $u\in\V$, we obtain
	\[
		\poscost{j}{\top_j(v)} 
		\leq \poscost{j}{c_j^*} + W_j d(c_j^*,v) + W_j d(v,\top_j(v)).
	\]
	Averaging this inequality over $v$ yields
	\begin{align} 
		\frac1{W_j}\sum_{v\in\V} \wcomp{v}{j} \poscost{j}{\top_j(v)}  \notag
		&\leq 2 \poscost{j}{c_j^*}\\
		&\quad+\sum_{v\in\V}\wcomp{v}{j}d(v,\top_j(v)). \label{generic:eq3}
	\end{align}
		
	Consequently,
\begin{align*}
		\poscost{j}{c_j}
		&\leq \beta\min_{c\in A_j}\poscost{j}{c} \tag{by \cref{generic:eq1}} \\
		&\leq \frac{\beta}{W_j}
			\sum_{v\in\V}\wcomp{v}{j}
			\poscost{j}{\top_j(v)} \tag{by \cref{generic:eq2}}\\
		&\leq 2\beta\,\poscost{j}{c_j^*}\\
		&\quad+\beta \sum_{v\in\V}\wcomp{v}{j}d(v,\top_j(v)). \tag{by \cref{generic:eq3}}
	\end{align*}
	When $W_j=0$, the weight column $\wcol{j}$ is identically zero, so both sides vanish and the same inequality holds trivially.
	
	Summing over all positions, and then applying Lemma~\ref{lem:top-cost}, we obtain
	\begin{align*}
		\cost(\sigma_f)
		&\leq2\beta \cost(\sigma^*)
		+\beta\sum_{j=1}^m\sum_{v\in\V}
		  \wcomp{v}{j}d(v,\top_j(v))\\
		&\le 3\beta\cost(\sigma^*),
	\end{align*}
	which completes the proof.
\end{proof}

Call a recursive construction
\emph{locally $\beta$-approximate} if
\[
\poscost{j}{c_j}
  \leq\beta\min_{c\in A_j}\poscost{j}{c}
\]
at every position with $W_j>0$.

\begin{proposition}\label{prop:2beta-lower}
  For every $\beta \ge 1$, locally $\beta$-approximate \linebreak recursion can
  incur a cost ratio of $2\beta$.
\end{proposition}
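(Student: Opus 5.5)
The plan is to build an explicit three-candidate instance on the real line in which every recursive choice is exactly $\beta$ times the best available option. The trap is that the first choice takes the candidate that is optimal for position~2. The second position is then forced onto a candidate that is only locally $\beta$-approximate relative to a remaining candidate, and the optimum never pays for that remaining candidate at position~2.

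Place candidates $a$, $c$, $x$ on the line at $0$, $1$ and $1+\beta$, respectively. Take two voters. Voter $v$ is co-located with $a$ and has weights $\wrow{v}=(\beta,0,0)$. Voter $u$ is co-located with $c$ and has weights $\wrow{u}=(1,1,0)$. Both vectors are monotone and non-negative, and the preferences are induced by the line metric, so $d\cons\prof$ holds automatically. The position costs are:
\begin{itemize}
\item $\poscost{1}{a}=1$, $\poscost{1}{c}=\beta$, and $\poscost{1}{x}=\beta(1+\beta)+\beta\ge\beta$;
\item $\poscost{2}{c}=0$, $\poscost{2}{a}=1$, and $\poscost{2}{x}=\beta$;
\item position~$3$ has $W_3=0$, so it contributes nothing.
\end{itemize}

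Next I check that the ranking $(c,x,a)$ is a valid locally $\beta$-approximate recursion.
\begin{itemize}
\item At position~1, $\min_{A_1}\poscost{1}{\cdot}=1$, attained by $a$, and $\poscost{1}{c}=\beta\le\beta\cdot 1$, so choosing $c_1=c$ is allowed.
\item At position~2, $A_2=\{a,x\}$ and the minimum is $\poscost{2}{a}=1$. Since $\poscost{2}{x}=\beta$, choosing $c_2=x$ is allowed.
\item Position~3 has $W_3=0$, so no condition applies there.
\end{itemize}
The resulting cost is $\beta+\beta+0=2\beta$.

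On the other side, the ranking $(a,c,x)$ has cost $1+0+0=1$. Every ranking costs at least $\min_c \poscost{1}{c}=1$, so this is optimal, and the ratio is exactly $2\beta$.

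The main point needing care is that monotonicity of the weights prevents a voter from caring about position~2 without caring at least as much about position~1. This is handled by giving $u$ equal weight on positions~1 and~2. Placing $u$ at $c$ then contributes nothing to $\poscost{1}{c}$, while it makes $\poscost{1}{a}$ exactly $1$ and $\poscost{2}{a}$ exactly $1$. Scaling voter $v$'s weight to $\beta$ calibrates $\poscost{1}{c}$ to exactly $\beta$ times the position-1 optimum.
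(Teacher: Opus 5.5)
Your construction is exactly the paper's instance up to relabeling: your $v$, $u$, $c$, $x$ play the roles of the paper's $u$, $v$, $b$, $c$. The weight vectors $(\beta,0,0)$ and $(1,1,0)$, the locally $\beta$-approximate choices, and the resulting ratio of $2\beta$ against an optimal ranking of cost $1$ are all the same, and your computations are correct. Your justification of optimality (every ranking pays at least $\min_c \poscost{1}{c}=1$) makes explicit a step the paper only asserts.
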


\begin{proof}
  Consider two voters $u,v$ and three candidates $a,b,c$ on the real line, 
  placed at $u=a=0$, $v=b=1$, and $c=\beta+1$.
  Thus, $u$ ranks ${a \succ b \succ c}$, 
  while $v$ ranks ${b \succ a \succ c}$.
  Give the voters monotone weight vectors 
  \[\wrow{u}=(\beta,0,0) \quad \text{and} \quad \wrow{v}=(1,1,0).\]
  
  At the first position, $\poscost{1}{a}=1$ and $\poscost{1}{b}=\beta$, 
  so selecting $b$ is locally $\beta$-approximate.
  Once $b$ is removed, $\poscost{2}{a}=1$ and
  $\poscost{2}{c}=\beta$, so selecting $c$ is locally $\beta$-approximate.
  The resulting ranking $(b,c,a)$ has cost $2\beta$, whereas $(a,b,c)$ is optimal with cost~$1$.
\end{proof}

Proposition~\ref{prop:2beta-lower} does not imply that Weighted Recursive~$f$ has
distortion at least $2\beta$ for every function $f$ of distortion~$\beta$: 
in the construction above, $f$ may select $a$ rather than $b$ and still have distortion~$\beta$.
Rather, it shows that the per-round cost comparison alone cannot yield a guarantee below $2\beta$. 
This means that any upper bound better than $2 \beta$ requires additional structural properties of the underlying rule. 
\pagebreak

The clone construction is likewise conceptual rather than computationally efficient.
After clearing denominators, a~voter of integer weight $K$ is represented by $K$ clones. 
Since $K$ requires only $O(\log K)$ bits to encode, explicitly constructing the cloned election can take time exponential in the input size. 
For \pluveto, we avoid this blowup \linebreak using the direct fractional extension \fracveto, which \fkreplace{runs in polynomial time and achieves thethe optimal distortion of~$3$}{results in a social welfare function with polynomial running time and optimal distortion~$3$.} 


\subsection{Recursive \fracveto}


For every position~$j$ with $W_j>0$, we define
\begin{align*}
  \pcomp{v}^{(j)} &:=\frac{\wcomp{v}{j}}{W_j} \quad \text{for all } v \in \V \\
  \qcomp{c}^{(j)} &:=\sum_{\substack{v\in\V:\\\top_j(v)=c}}\pcomp{v}^{(j)} \quad  \text{for all } c \in A_j.
\end{align*}

Instead of cloning the voters, we run \fracveto with 
normalized weights $\pvec^{(j)} \in \Delta_V$ and $\qvec^{(j)} \in \Delta_{A_j}$ 
on the~election restricted to $A_j$, place the winning candidate~$c_j$ in position~$j$,
and continue with remaining candidates. \linebreak
This implements Weighted Recursive \pluveto
without cloning the voters and thus runs in polynomial time.

We now analyze the rule.
Fix an election \election, a metric $d\cons\prof$, and a monotone
weight profile $\wmat\in\wdom$, and let
$\sigma=(c_1,\ldots,c_m)$ be the returned ranking.

The first lemma uses the \pq-matching guarantee of \fracveto to compare the selected candidate with any \emph{metric point}, 
by which we mean a point that can be adjoined to the metric space $(\V\cup A_j,d)$ without violating the~triangle inequality.
In particular, this includes candidates removed in earlier iterations.

\begin{lemma}\label{lem:metric_point}
	For every position $j$ and every metric point $x$,
	\begin{align*}
	  \poscost{j}{c_j}
	  &\leq 2\,\poscost{j}{x}
	  + \sum_{v\in\V}\wcomp{v}{j}d(v,\top_j(v)).
	\end{align*}
\end{lemma}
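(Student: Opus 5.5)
The plan is to use the matching certificate from Lemma~\ref{lem:fracveto}, applied to the election restricted to $A_j$ with $\pvec^{(j)}$ and $\qvec^{(j)}$. If $W_j=0$, both sides are zero and there is nothing to prove. So assume $W_j>0$. Then $c_j$ admits a $(\pvec^{(j)},\qvec^{(j)})$-matching $\Mmat$. This means $\Mcomp{v}{c}>0$ only if $c_j\succeq_v c$, where $c\in A_j$ and the preferences are restricted to $A_j$. It is convenient to refine $\Mmat$ into a voter-to-voter transport plan $T$. For each candidate $c$, the mass $\qcomp{c}^{(j)}$ equals the total $\pvec^{(j)}$-mass of the voters $u$ with $\top_j(u)=c$. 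We split column $\Mcol{c}$ proportionally among those voters, setting
\[
T_{vu}:=\Mcomp{v}{c}\,\pcomp{u}^{(j)}/\qcomp{c}^{(j)} \quad\text{for } c=\top_j(u).
\]
Then $T$ has row sums $\pcomp{v}^{(j)}$ and column sums $\pcomp{u}^{(j)}$. Moreover, $T_{vu}>0$ implies $c_j\succeq_v \top_j(u)$, and consistency of $d$ then gives $d(v,c_j)\le d(v,\top_j(u))$.

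With $T$ in hand, the bound is a chain of triangle inequalities. Write
\[
\poscost{j}{c_j}=W_j\sum_{v}\pcomp{v}^{(j)}d(v,c_j)=W_j\sum_{v,u}T_{vu}\,d(v,c_j).
\]
Each term satisfies $d(v,c_j)\le d(v,\top_j(u))$. The key step is then the triangle inequality through the metric point $x$ and voter $u$:
\[
d(v,\top_j(u))\le d(v,x)+d(x,u)+d(u,\top_j(u)).
\]
This is legitimate because $x$ can be adjoined to $(\V\cup A_j,d)$ without violating the triangle inequality. Summing against $T$ and using the row sums, the first term gives $W_j\sum_v \pcomp{v}^{(j)}d(v,x)=\poscost{j}{x}$. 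Using the column sums, the second term gives the same quantity $\poscost{j}{x}$. Also by the column sums, the third term gives $W_j\sum_u\pcomp{u}^{(j)}d(u,\top_j(u))=\sum_u\wcomp{u}{j}d(u,\top_j(u))$. Adding these three contributions yields exactly the claimed inequality.

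The main obstacle is the construction of $T$. The matching certificate only pairs voters with candidates, whereas the argument needs each voter $v$ paired with a voter $u$ whose top choice $v$ ranks weakly below $c_j$. The proportional splitting above handles this because $\qvec^{(j)}$ is precisely the pushforward of $\pvec^{(j)}$ under $\top_j$. Two side points need checking. First, candidates with $\qcomp{c}^{(j)}=0$ receive no mass and so cause no division issue. Second, the statement allows $x$ to be a previously removed candidate, but this is harmless, since $x$ only enters through triangle inequalities and never through preferences.
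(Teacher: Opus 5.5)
Your proof is correct and is essentially the paper's argument. Both proofs use the \fracveto matching certificate to replace $d(v,c_j)$ by the distance to the matched candidate, then route through the metric point $x$ and through a voter whose top choice is that candidate, which yields $2\,\poscost{j}{x}$ plus the top-choice term. The only difference is that the paper does not need your voter-to-voter plan $T$: it first applies $d(v,c)\le d(v,x)+d(x,c)$, so the remaining term $d(x,c)$ depends only on $c$, the column sums of $\Mmat$ turn it directly into $\sum_{v}\wcomp{v}{j}d(x,\top_j(v))$, and a second triangle inequality finishes.
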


\begin{proof}
	The claim is immediate when $W_j=0$, so suppose that $W_j>0$.
	By Lemma~\ref{lem:fracveto}, the candidate placed at position~$j$ admits a $(\pvec^{(j)},\qvec^{(j)})$-matching.
	Scaling this matching by $W_j$ gives a matrix $\Mmat$ satisfying
	\begin{align*}
	  \sum_{c\in A_j}\Mcomp{v}{c}
	  &=\wcomp{v}{j}
	  &&\text{for every }v\in\V, \\
	  \sum_{v\in\V}\Mcomp{v}{c}
	  &=\sum_{\substack{v \in \V: \\ \top_j(v)=c}}\wcomp{v}{j} 
	  &&\text{for every }c\in A_j.
	\end{align*}

	As $c_j$ admits $\Mmat$, the condition $\Mcomp{v}{c}>0$ implies ${c_j\succeq_v c}$, and therefore $d(v,c_j)\leq d(v,c)$.
	Hence, using the~row and column sums of $\Mmat$, and the triangle inequality twice, \linebreak
	we~obtain \pagebreak
	\begin{align*}
		\poscost{j}{c_j}
		&= \sum_{v \in \V} \wcomp{v}{j} \, d(v, c_j) \\
		&= \sum_{v \in \V} \sum_{c \in A_j} \Mcomp{v}{c} \, d(v, c_j) \\ 
		&\leq \sum_{v\in\V}\sum_{c \in A_j} \Mcomp{v}{c} \, d(v,c) \\ 
		&\le \poscost{j}{x}
			+ \sum_{v\in\V}\sum_{c \in A_j} \Mcomp{v}{c} \, d(x, c) \\ 
		&= \poscost{j}{x}
		 +\sum_{v\in\V}\wcomp{v}{j}d(x,\top_j(v)) \\ 
		&\leq 2\,\poscost{j}{x}
		+\sum_{v\in\V}\wcomp{v}{j}d(v,\top_j(v)). 
	\end{align*}
	Since $x$ is an arbitrary metric point, this completes the proof.
\end{proof}

\begin{theorem}\label{thm:main}
	For every monotone weight profile, Recursive \fracveto has distortion at most~$3$.
	This bound is \linebreak optimal among weight-aware social welfare functions.
\end{theorem}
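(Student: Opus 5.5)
The plan is to apply Lemma~\ref{lem:metric_point} at every position with the point $x$ chosen as the candidate $c_j^*$ of an optimal ranking. Summing over positions then produces a top-cost term, which Lemma~\ref{lem:top-cost} absorbs. Fix an election \election, a metric $d\cons\prof$, and a monotone weight profile $\wmat\in\wdom$. Let $\sigma=(c_1,\ldots,c_m)$ be the ranking returned by Recursive \fracveto and let $\sigma^*=(c_1^*,\ldots,c_m^*)$ be an optimal ranking under $d$ and $\wmat$.

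The first step is to instantiate Lemma~\ref{lem:metric_point} at position $j$ with $x:=c_j^*$. This is exactly where the black-box argument of Theorem~\ref{thm:generic-recursion} lost a factor, because $c_j^*$ need not lie in $A_j$. Lemma~\ref{lem:metric_point}, however, allows any metric point, and $c_j^*$ already belongs to the metric space $(\V\cup\C,d)$. Adjoining it to $(\V\cup A_j,d)$ therefore cannot violate the triangle inequality, whether or not it was removed earlier. This gives, for every $j$ (trivially when $W_j=0$),
\[
  \poscost{j}{c_j}\le 2\,\poscost{j}{c_j^*}+\sum_{v\in\V}\wcomp{v}{j}d(v,\top_j(v)).
\]

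The second step sums this inequality over $j=1,\ldots,m$. Using the decomposition $\cost(\sigma)=\sum_j\poscost{j}{\sigma_j}$ for both $\sigma$ and $\sigma^*$, we get
\[
  \cost(\sigma)\le 2\cost(\sigma^*)+\sum_{j=1}^m\sum_{v\in\V}\wcomp{v}{j}d(v,\top_j(v)).
\]
Here $\top_j$ refers to the candidates actually removed by the rule, which is the setting of Lemma~\ref{lem:top-cost}. That lemma holds for every removal order and every comparison ranking, so applying it with $\sigma^*$ bounds the last sum by $\cost(\sigma^*)$. Hence $\cost(\sigma)\le 3\cost(\sigma^*)$. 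Since $d$ and $\wmat$ were arbitrary, the distortion of Recursive \fracveto is at most $3$.

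For optimality, take $\wrow{v}=(1,0,\ldots,0)$ for every voter $v$. Then $\cost(\sigma)=\cost(\sigma_1\cond d)$, so any weight-aware social welfare function induces a social choice function: output $\sigma_1$. The weights carry no information about the metric, so this induced function can depend only on $\prof$. The known lower bound of $3$ for deterministic social choice functions \cite{anshelevich:bhardwaj:elkind:postl:skowron} therefore transfers directly. The only delicate point in the whole argument is justifying that previously removed candidates are valid metric points for Lemma~\ref{lem:metric_point}. That lemma's proof uses only consistency for the available candidates $c\in A_j$ and the triangle inequality through $x$, and both hold for $x=c_j^*\in\C$.
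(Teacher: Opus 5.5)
Your proposal is correct and matches the paper's proof essentially step for step. You apply Lemma~\ref{lem:metric_point} with $x=c_j^*$ (valid even if $c_j^*$ was removed earlier), sum over positions, and absorb the top-cost term with Lemma~\ref{lem:top-cost}; tightness comes from the top-only profile $\wrow{v}=(1,0,\ldots,0)$, which reduces to the single-winner lower bound of~$3$.
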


\begin{proof}
	Suppose that $\sigma^*=(c_1^*,\ldots,c_m^*) \in \sdom{\C}$ is an optimal ranking under $d$ and $\wmat$.
	For every position $j$, candidate $c_j^*$ is a metric point, even if it is no longer available when $c_j$ is selected.
	We may therefore apply Lemma~\ref{lem:metric_point} with $x=c_j^*$. 

	Summing the resulting inequalities over all positions and applying Lemma~\ref{lem:top-cost}, we obtain 
	\begin{align*}
		\cost(\sigma)
		&=\sum_{j=1}^m \poscost{j}{c_j}\\
		&\leq 2\,\cost(\sigma^*)
		+\sum_{j=1}^m\sum_{v\in\V}
		  \wcomp{v}{j}d(v,\top_j(v))\\
		&\le 3\,\cost(\sigma^*).
	\end{align*}
	
	Since the election, metric, and weight profile fixed above were arbitrary, 
	this completes the proof.
	Tightness follows by setting $\wrow{v}=(1,0,\ldots,0)$ for each voter ${v \in \V}$, 
	which reduces the objective to the single-winner~case.
\end{proof}


\subsection{Weighted Recursive \copeland}


\copeland\ selects a candidate of maximum out-degree in the
\emph{majority tournament}, the directed graph whose vertices are the
candidates with an edge from candidate~$a$ to candidate~$b$ whenever a strict majority of voters prefer $a$ to $b$.

For every position~$j$ with $W_j>0$, Weighted Recursive \linebreak \copeland 
constructs the \emph{weighted majority tournament}
$T_j=(A_j,E_j)$  as follows.
For each pair of distinct candidates $a,b\in A_j$, orient the edge
from $a$ to $b$ whenever
\[
  \sum_{v:a\succ_v b}\wcomp{v}{j}>\frac{W_j}{2}.
\]
The masses on the two sides sum to $W_j$.
Thus, exactly one orientation satisfies the strict inequality unless
both masses equal $W_j/2$, in which case we orient the edge arbitrarily. \linebreak
Consequently, $T_j$ is a tournament.
Weighted Recursive \copeland places a candidate $c_j$ of maximum
out-degree in $T_j$ in position~$j$ and continues.\pagebreak

Fix an election \election, a metric $d\cons\prof$, and a monotone
weight profile $\wmat\in\wdom$, and let
$\sigma=(c_1,\ldots,c_m)$ be the returned ranking.
We first bound the increase in social cost along a single edge of a
weighted majority tournament.
As in Lemma~\ref{lem:metric_point}, the reference point need not correspond to
an~available candidate, or even to a candidate at all.

\begin{lemma}\label{lem:majority_metric_point}
	For every position $j$ with $W_j>0$, every edge $(a,b)\in E_j$, and
	every metric point $x$,
	\[
	\poscost{j}{a}
	\leq
	\poscost{j}{b}
	+2\,\poscost{j}{x}.
	\]
\end{lemma}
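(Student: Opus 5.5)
This is the standard single-edge majority argument for \copeland, now with voter masses $\wcomp{v}{j}$. Fix $j$ with $W_j>0$, an edge $(a,b)\in E_j$, and a metric point $x$. Let $S:=\setbuild{v\in\V}{a\succ_v b}$. Since $(a,b)\in E_j$, we have $\wcomp{S}{j}:=\sum_{v\in S}\wcomp{v}{j}\ge W_j/2$. In the tie case the edge may be oriented arbitrarily, so only the weak inequality is available, and the argument should use only that. Every $v\in S$ satisfies $d(v,a)\le d(v,b)$ because $d\cons\prof$.

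The plan is to split $\poscost{j}{a}=\sum_v\wcomp{v}{j}d(v,a)$ over $S$ and $\V\setbuild{}{}\!\setminus S$.

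\begin{itemize}
\item For $v\in S$, bound $d(v,a)\le d(v,b)$.
\item For $v\notin S$, apply the triangle inequality through $x$:
\[
d(v,a)\le d(v,x)+d(x,a).
\]
\end{itemize}

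This gives
\[
\poscost{j}{a}\le \sum_{v\in S}\wcomp{v}{j}d(v,b)+\sum_{v\notin S}\wcomp{v}{j}d(v,x)+\Bigl(W_j-\wcomp{S}{j}\Bigr)d(x,a).
\]
The first term is at most $\poscost{j}{b}$ and the second at most $\poscost{j}{x}$. It remains to show
\[
\Bigl(W_j-\wcomp{S}{j}\Bigr)d(x,a)\le \poscost{j}{x}.
\]

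To bound $d(x,a)$, use the voters in $S$. For $v\in S$, $d(x,a)\le d(x,v)+d(v,a)\le d(x,v)+d(v,b)$. Averaging with masses $\wcomp{v}{j}$ over $S$ yields
\[
\wcomp{S}{j}\,d(x,a)\le \sum_{v\in S}\wcomp{v}{j}\bigl(d(v,x)+d(v,b)\bigr).
\]
Since $W_j-\wcomp{S}{j}\le \wcomp{S}{j}$, this bound introduces a $d(v,b)$ term that the $b$ term cannot absorb a second time. So the naive split gives $2\,\poscost{j}{b}+2\,\poscost{j}{x}$, not the claimed bound.

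The main obstacle is avoiding this double counting of $\poscost{j}{b}$. To fix it, bound $d(x,a)$ through $b$ instead: $d(x,a)\le d(x,b)+d(b,a)$. Then handle $d(a,b)$ for $v\in S$ via
\[
d(a,b)\le d(a,v)+d(v,b)\le 2\,d(v,b),
\]
and $d(x,b)\le d(x,v)+d(v,b)$. This still produces $d(v,b)$ terms, so a cleaner accounting is needed.

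The cleaner accounting pairs mass in $\V\setminus S$ with equal mass in $S$. Because $W_j-\wcomp{S}{j}\le\wcomp{S}{j}$, one can choose a fractional transport $\Mmat$ from $\V\setminus S$ into $S$ that preserves the $\wcomp{\cdot}{j}$ masses on the $\V\setminus S$ side and uses at most mass $\wcomp{u}{j}$ of each $u\in S$. For a paired $v\notin S$ and $u\in S$, use
\[
d(v,a)\le d(v,x)+d(x,u)+d(u,a)\le d(v,x)+d(x,u)+d(u,b).
\]
Count the charges for each $u\in S$. It is charged its own term $d(u,a)\le d(u,b)$, and through the pairing additionally $d(u,b)+d(u,x)$. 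So the $d(u,b)$ term still appears twice.

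The fix is to charge the paired $u\in S$ as follows. Bound $u$'s own contribution by $d(u,a)\le d(u,x)+d(x,a)$. Bound its partner $v$ by $d(v,a)\le d(v,x)+d(x,a)$. Now the partnered quantity $2\,d(x,a)$ must be controlled by
\[
2\,d(x,a)\le d(x,u)+d(u,a)+d(x,u)+d(u,a)\le 2\,d(x,u)+2\,d(u,b).
\]
This again exceeds the budget.

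Since every naive accounting seems to double-charge $b$, I would check a small example, for instance points on a line, for whether the claimed constant $2$ on $x$ together with constant $1$ on $b$ holds. I expect the correct inequality to come from writing, per pair, $d(v,a)+d(u,a)\le d(v,b)+d(u,b)+2(d(v,x)+d(u,x))$, and verifying it carefully for the $v\notin S$ side. That verification is the crux.
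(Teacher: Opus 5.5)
\textbf{There is a genuine gap: the proposal does not prove the lemma.} Each accounting you carry out fails. The one you finally propose is left unverified, and you even suggest testing whether the statement holds at all.

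The earlier attempts fail for a common reason. Bounding $d(v,a)\le d(v,b)$ for $v\in S$ discards the slack $d(v,b)-d(v,a)\ge 0$, and that slack is exactly what must pay for the voters outside $S$. Routing those voters through $d(x,a)$ cannot work either. The residual inequality $(W_j-w_{Sj})\,d(x,a)\le\poscost{j}{x}$ is false in general: put all voters at $x$, put $a$ and $b$ at distance $1$ from them, and take $W_j/2\le w_{Sj}<W_j$. The left side is then positive while the right side is $0$.

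The missing idea is to bound the difference $\delta(v):=d(v,a)-d(v,b)$ directly, rather than $d(v,a)$ term by term. Since $d(\cdot,a)$ and $d(\cdot,b)$ are each $1$-Lipschitz, $\delta(v)-\delta(u)\le 2d(u,v)$. For $u\in S$, consistency gives $\delta(u)\le 0$. Your per-pair inequality then follows in one line:
\[
\delta(v)+\delta(u)\le\delta(v)-\delta(u)\le 2d(u,v)\le 2d(u,x)+2d(v,x).
\]
Now sum this against your transport from $\V\setminus S$ into $S$. It exists because $w_{Sj}\ge W_j/2>0$; for example, take $M_{vu}=w_{vj}w_{uj}/w_{Sj}$. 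Discard the unmatched mass of $S$, which you may do because it carries $\delta\le 0$. This gives
\[
\poscost{j}{a}-\poscost{j}{b}=\sum_{v}w_{vj}\delta(v)\le 2\,\poscost{j}{x},
\]
since the matched $x$-terms use at most mass $w_{uj}$ of each voter.

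Completed this way, your argument is the paper's proof. The paper partitions voters by the sign of $\delta$ rather than by preference. It averages over $V^-=\{u:\delta(u)\le 0\}$ with weights $w_{uj}/t$, and uses $s\le t$.
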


\begin{proof}
	Let $\delta(v):=d(v,a)-d(v,b)$, and partition the voters into
	$V^+:=\setbuild{v\in\V}{\delta(v)>0}$ and $V^-:=\V\setminus V^+$.
	Every voter in $V^+$ ranks $b$ above $a$, while every voter who ranks
	$a$ above $b$ belongs to $V^-$.
	Set $s:=\sum_{v\in V^+}\wcomp{v}{j}$
	and $t:=\sum_{v\in V^-}\wcomp{v}{j}.$
	Since $(a, b) \in E_j$, the total mass of voters
	ranking $a$ above $b$ is at least the total mass of those ranking $b$
	above $a$, and thus, ${s\leq t}$.
	
	If $s=0$, then
	$\sum_{v\in\V}\wcomp{v}{j}\delta(v)\leq0$, and the claim follows.
	Hence, suppose that $s>0$, which also implies that $t>0$.
	
	For every $v\in V^+$ and $u\in V^-$, we have
	\begin{align*}
		\delta(v)
		&\leq \delta(v)-\delta(u)\\
		&=
		\bigl(d(v,a)-d(u,a)\bigr)
		+\bigl(d(u,b)-d(v,b)\bigr)\\
		&\leq 2d(v,u)\\
		&\leq 2d(v,x)+2d(u,x).
	\end{align*}
	Averaging this inequality over $u\in V^-$ with weights
	$\wcomp{u}{j}/t$, multiplying by $\wcomp{v}{j}$, and summing over
	$v\in V^+$ gives
	\begin{align*}
		\poscost{j}{a}
		-\poscost{j}{b}
		&=\sum_{v\in\V}\wcomp{v}{j}\delta(v)\\
		&\leq\sum_{v\in V^+}\wcomp{v}{j}\delta(v)\\
		&\leq
		2\sum_{v\in V^+}\wcomp{v}{j}d(v,x)\\
		&\quad+
		2\frac{s}{t}\sum_{u\in V^-}\wcomp{u}{j}d(u,x)\\
		&\leq 2\,\poscost{j}{x},
	\end{align*}
	where the last inequality follows from $s\leq t$.
\end{proof}

A candidate belongs to the \emph{uncovered set} if, for every other
candidate $b$, it either defeats $b$ directly or defeats some candidate
that defeats $b$. It is well known that every \copeland winner belongs
to the uncovered set (see, for example,
\cite{anshelevich:bhardwaj:elkind:postl:skowron}).
Combining this property with
Lemma~\ref{lem:majority_metric_point} gives the following corollary.

\begin{corollary}\label{cor:copeland_metric_point}
  For every position $j$ with $W_j>0$, every candidate $b\in A_j$, and
  every metric point $x$,
  \[
    \poscost{j}{c_j}
    \leq
   \poscost{j}{b}
    +4\,\poscost{j}{x}.
  \]
\end{corollary}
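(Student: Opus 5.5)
The plan is to combine the uncovered-set property of the \copeland winner in the weighted tournament $T_j$ with Lemma~\ref{lem:majority_metric_point}, applied at most twice along a directed path of length two. Fix a position $j$ with $W_j>0$, a candidate $b\in A_j$, and a metric point $x$. First I would check that $c_j$, being a maximum out-degree vertex of the tournament $T_j$, lies in the uncovered set of $T_j$. The standard argument carries over verbatim to weighted tournaments, since it uses only the tournament structure. Suppose some $b'$ covered $c_j$: then $b'$ beats $c_j$ and beats everything $c_j$ beats, so $b'$ would have strictly larger out-degree, a contradiction.

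Next, split into cases according to how $c_j$ reaches $b$.
\begin{itemize}
\item If $b=c_j$, the claim is trivial because $\poscost{j}{x}\geq 0$.
\item If $(c_j,b)\in E_j$, Lemma~\ref{lem:majority_metric_point} directly gives $\poscost{j}{c_j}\leq \poscost{j}{b}+2\,\poscost{j}{x}$, which is stronger than required.
\item Otherwise, the uncovered-set property supplies some $y\in A_j$ with $(c_j,y)\in E_j$ and $(y,b)\in E_j$. Applying Lemma~\ref{lem:majority_metric_point} to each edge with the same metric point $x$ gives $\poscost{j}{c_j}\leq \poscost{j}{y}+2\,\poscost{j}{x}$ and $\poscost{j}{y}\leq \poscost{j}{b}+2\,\poscost{j}{x}$. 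Chaining the two yields the bound $4\,\poscost{j}{x}$.
\end{itemize}

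The only point needing care is the uncovered-set claim in the presence of ties. When the two masses both equal $W_j/2$, the edge is oriented arbitrarily. I would note that Lemma~\ref{lem:majority_metric_point} was proved using only $s\le t$, which holds for either orientation of a tied pair, so every edge of $T_j$, including arbitrarily oriented ones, satisfies the lemma. The covering argument is purely about out-degrees in the tournament as constructed. Hence no obstacle remains, and the corollary follows.
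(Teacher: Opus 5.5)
Your proposal is correct and follows the paper's argument. The paper also places the Copeland winner of $T_j$ in the uncovered set, reaches $b$ from $c_j$ by a directed path of length at most two, and applies Lemma~\ref{lem:majority_metric_point} along each edge with the same metric point $x$. You additionally verify the uncovered-set property and the handling of tied edges (via $s\le t$), which the paper treats as well known.
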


Taking $x=b$ in Corollary~\ref{cor:copeland_metric_point} recovers the usual
\mbox{factor-$5$} comparison for a single \copeland winner.
At position $j$ of the recursive rule, however, the candidate occupying
that position in an optimal ranking may no longer be available. \linebreak
We~therefore use it as the metric point $x$ 
and average the available comparison candidate $b$ over voters' top choices.

\begin{theorem}\label{thm:cope7}
  For every monotone weight profile, Weighted Recursive \copeland has distortion
  at most~$7$.
\end{theorem}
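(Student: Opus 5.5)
Fix an optimal ranking $\sigma^*=(c_1^*,\ldots,c_m^*)$ under $d$ and $\wmat$. The plan is to prove, for every position $j$ with $W_j>0$, a per-position bound of the form
\[
  \poscost{j}{c_j}\leq 6\,\poscost{j}{c_j^*}+\sum_{v\in\V}\wcomp{v}{j}d(v,\top_j(v)),
\]
in the spirit of Lemma~\ref{lem:metric_point}. Positions with $W_j=0$ are trivial. Summing over $j$ and applying Lemma~\ref{lem:top-cost} then gives $\cost(\sigma)\leq 6\,\cost(\sigma^*)+\cost(\sigma^*)=7\,\cost(\sigma^*)$.

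To obtain the per-position bound, I apply Corollary~\ref{cor:copeland_metric_point} with the metric point $x=c_j^*$, which is legitimate even if $c_j^*$ has already been removed. Instead of a single comparison candidate $b$, I use $b=\top_j(v)\in A_j$ for each voter $v$. Averaging with weights $\wcomp{v}{j}/W_j$ gives
\[
  \poscost{j}{c_j}\leq \frac{1}{W_j}\sum_{v\in\V}\wcomp{v}{j}\,\poscost{j}{\top_j(v)}+4\,\poscost{j}{c_j^*}.
\]
The averaged term is then bounded exactly as in \cref{generic:eq3} in the proof of Theorem~\ref{thm:generic-recursion}. For each pair of voters $u,v$, the triangle inequality gives $d(u,\top_j(v))\leq d(u,c_j^*)+d(c_j^*,v)+d(v,\top_j(v))$. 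Weighting by $\wcomp{u}{j}$, and then averaging over $v$ with weights $\wcomp{v}{j}/W_j$, yields at most $2\,\poscost{j}{c_j^*}+\sum_v\wcomp{v}{j}d(v,\top_j(v))$. Adding the two pieces gives the claimed factor $6$ plus the top-choice term.

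The step needing the most care is Corollary~\ref{cor:copeland_metric_point} itself, which I take as given. If it had to be justified, the argument would go through the uncovered-set property of $T_j$: either $c_j\to b$ directly, in which case Lemma~\ref{lem:majority_metric_point} gives the factor $2$; or there is a two-step path $c_j\to y\to b$, in which case two applications of the lemma with the same metric point $x$ give the factor $4$. If $b=c_j$, the bound is trivial. The point to verify is that the Copeland winner of the \emph{weighted} tournament $T_j$ is still uncovered. This holds because $T_j$ is a genuine tournament after tie-breaking, and the standard argument applies: if $b$ covered the maximum out-degree vertex, then $b$ would have strictly larger out-degree, a contradiction.

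The remaining check is that the coefficient on $\poscost{j}{c_j^*}$ really comes to $4+2=6$ and not more, so that the global constant is $7$.
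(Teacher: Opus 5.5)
Your proposal is correct and follows essentially the same route as the paper. You apply Corollary~\ref{cor:copeland_metric_point} with $b=\top_j(u)$ and $x=c_j^*$, average with weights $\wcomp{u}{j}/W_j$, bound the averaged term by $2\,\poscost{j}{c_j^*}+\sum_{v}\wcomp{v}{j}d(v,\top_j(v))$ via the triangle inequality, and sum over positions using Lemma~\ref{lem:top-cost}. Your added justification of the corollary is also sound and fills in a step the paper only asserts: you use the uncovered-set property of the weighted tournament $T_j$ (a direct edge or a two-step path, each handled by Lemma~\ref{lem:majority_metric_point} with the same metric point $x$).
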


\begin{proof}
	Let $\sigma^*=(c_1^*,\ldots,c_m^*)\in\sdom{\C}$ be an optimal ranking
	under $d$ and $\wmat$.
	Fix a position $j$ with ${W_j>0}$ and a voter $u\in\V$.
	Since $\top_j(u)\in A_j$, we may apply
	Corollary~\ref{cor:copeland_metric_point} with $b=\top_j(u)$ and $x=c_j^*$.
	\fkedit{Notice that $c_j^*$ is a valid metric point even if it was removed in an earlier iteration.}
	Thus,
	\[
		\poscost{j}{c_j}
		\leq 4\,
		  \poscost{j}{c_j^*}
		+\poscost{j}{\top_j(u)}.
	\]
	Averaging this inequality over voters $u$ with probabilities
	$\wcomp{u}{j}/W_j$, and writing
	\[
	B_j:=
	\frac{1}{W_j}\sum_{u\in\V}
	\wcomp{u}{j}
	\poscost{j}{\top_j(u)},
	\]
	gives
	\begin{equation}\label{eq:copeland_average}
	\poscost{j}{c_j}
	\leq
	4\,
	\poscost{j}{c_j^*}
	+B_j. \tag{$*$}
	\end{equation}
	
	For every pair of voters $u,v\in\V$, the triangle inequality gives
	\[
	d(v,\top_j(u))
	\leq
	d(v,c_j^*)+d(u,c_j^*)+d(u,\top_j(u)).
	\]
	Consequently,
	\begin{align*}
		B_j &= \frac{1}{W_j}\sum_{u,v\in\V} \wcomp{u}{j}\wcomp{v}{j}d(v,\top_j(u))\\
		&\leq 2\,\poscost{j}{c_j^*}
		+\sum_{u\in\V}\wcomp{u}{j}d(u,\top_j(u)).
	\end{align*}
	Here, the first two terms produced by the triangle inequality each equal
	$\poscost{j}{c_j^*}$,
	while the third gives the final
	sum.
	Substituting this bound into \cref{eq:copeland_average}, we obtain
	\begin{align*}
		\poscost{j}{c_j}
		&\leq 6\,\poscost{j}{c_j^*}
		+\sum_{v\in\V}\wcomp{v}{j}d(v,\top_j(v)).
	\end{align*}
	The same inequality is immediate when $W_j=0$.
	
	Since Lemma~\ref{lem:top-cost} is independent of the rule used to select the
	candidates, summing over all positions yields
	\begin{align*}
		\cost(\sigma)
		&\leq 6\,\cost(\sigma^*)
		+
		\sum_{j=1}^m\sum_{v\in\V}
		  \wcomp{v}{j}d(v,\top_j(v))\\
		&\leq 7\,\cost(\sigma^*).
	\end{align*}
	
	Since the election, metric, and weight profile fixed above were arbitrary, 
	this completes the proof.
\end{proof}

When $\wrow{v}=(1,0,\ldots,0)$ for every voter $v \in \V$, 
only the first position contributes to the objective and Weighted Recursive \copeland reduces to ordinary \copeland.
The~single-winner lower bound of~$5$
\cite{anshelevich:bhardwaj:elkind:postl:skowron} therefore continues to
apply.
Thus, we currently only know that the distortion of Weighted Recursive \copeland
lies between~$5$ and~$7$.

\section{Identical Weights}\label{sec:identical-weights}

We now consider the second information regime, in which 
the weight profile is unknown, but all voters share the same monotone weight vector. 
Despite not observing this vector, we show that the optimal distortion of~$3$ can still be achieved.

While the~previous section required properties specific to
\fracveto and ~\copeland, here the distortion guarantee of every social
choice function will transfer to its recursive extension.

This regime subsumes both single-winner voting and committee selection.
The shared vector $\wvec=(1,0,\ldots,0)$ recovers the single-winner
objective, whereas
\[
  \wvec^{(k)}
  :=(\underbrace{1,\ldots,1}_{k\text{ entries}},0,\ldots,0)
\]
recovers the sum-of-costs objective for a committee of size~$k$.
\citet[Theorem~8]{GoelHK18} show that recursively applying 
a social choice function \linebreak preserves its distortion for these cutoff vectors.
Our result generalizes theirs to every monotone shared weight vector
and gives a sharper guarantee as its range decreases.


Write $\wrow{v}=\wvec=(\wcomp{1},\ldots,\wcomp{m})$ for every $v\in\V$,
where $\wcomp{1}\geq\cdots\geq\wcomp{m}\geq0$ and $\wcomp{1}>0$.
Let $\range{\wvec}:={1-\wcomp{m}/\wcomp{1}}\in[0,1]$ denote the
\emph{normalized range} of $\wvec$, which is invariant under scaling.

\begin{theorem}\label{thm:identical-ub}
	If $f$ has distortion at most $\beta\geq1$, 
	then under every shared weight vector $\wvec$, Recursive $f$ has distortion 
	at~most $1+(\beta-1)\cdot\range{\wvec}$.
\end{theorem}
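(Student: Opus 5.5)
The plan is to exploit that, with a shared vector $\wvec$, every position cost is a scalar multiple of the ordinary single-winner social cost. Write $\cost(c)=\sum_{v\in\V}d(v,c)$, so that $\poscost{j}{c}=\wcomp{j}\cost(c)$ and $\cost(\sigma)=\sum_{j}\wcomp{j}\cost(\sigma_j)$. Then compare the output $\sigma=(c_1,\ldots,c_m)$ of Recursive $f$ to the ranking $\sigma^*=(c_1^*,\ldots,c_m^*)$ that sorts candidates by increasing $\cost(\cdot)$. Since $\wvec$ is non-increasing, $\sigma^*$ is optimal by the rearrangement inequality. Note that Recursive $f$ never needs $\wvec$: at each step it simply runs $f$ on the election restricted to $A_j$.

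\emph{Per-position bound.} At position $j$, $f$ is applied to the election restricted to $A_j$, and $d$ restricted to $\V\cup A_j$ is consistent with the restricted preferences. Hence the distortion guarantee gives $\cost(c_j)\le\beta\min_{c\in A_j}\cost(c)$. Only $j-1$ candidates have been removed, so at least one of $c_1^*,\ldots,c_j^*$ lies in $A_j$. Because $\sigma^*$ is sorted, this gives $\min_{c\in A_j}\cost(c)\le\cost(c_j^*)$, and therefore $\cost(c_j)\le\beta\,\cost(c_j^*)$ for every $j$. This is the same counting argument as in Lemma~\ref{lem:top-cost}, but applied to social costs rather than to individual distances.

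\emph{Splitting off the flat part.} Let $T:=\sum_{c\in\C}\cost(c)$, which is the same for every ranking. Decompose each weight as $\wcomp{j}=\wcomp{m}+(\wcomp{j}-\wcomp{m})$ with $\wcomp{j}-\wcomp{m}\ge0$. Set $A:=\wcomp{m}T$ and $B:=\sum_j(\wcomp{j}-\wcomp{m})\cost(c_j^*)$. Then
\[
\cost(\sigma)=A+\sum_j(\wcomp{j}-\wcomp{m})\cost(c_j)\le A+\beta B,
\qquad
\cost(\sigma^*)=A+B .
\]
It follows that
\[
\frac{\cost(\sigma)}{\cost(\sigma^*)}\le 1+(\beta-1)\frac{B}{A+B}.
\]

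\emph{Bounding the fraction.} It remains to show $B/(A+B)\le(\wcomp{1}-\wcomp{m})/\wcomp{1}=\range{\wvec}$. This is equivalent to $\wcomp{m}B\le(\wcomp{1}-\wcomp{m})A=(\wcomp{1}-\wcomp{m})\wcomp{m}T$. It follows from $B\le(\wcomp{1}-\wcomp{m})\sum_j\cost(c_j^*)=(\wcomp{1}-\wcomp{m})T$. The degenerate cases $A+B=0$, or $\wcomp{m}=0$ (where the bound is just $\beta$), are handled directly.

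The only delicate point I expect is the per-position step. One must check that the guarantee of $f$ on the restricted election really compares against $\cost(c_j^*)$ even when $c_j^*$ itself has already been removed. The sorted choice of $\sigma^*$ together with the pigeonhole count resolves this. The rest is the algebraic interpolation above.
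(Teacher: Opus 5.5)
Your proposal is correct and follows the paper's proof essentially step for step. It uses the same cost-sorted optimal ranking $\sigma^*$, the same pigeonhole bound $\cost(c_j)\le\beta\,\cost(c_j^*)$, the same split $\wcomp{j}=\wcomp{m}+(\wcomp{j}-\wcomp{m})$ giving $\cost(\sigma)\le A+\beta B$ and $\cost(\sigma^*)=A+B$ (your $T$, $A$, $B$ are the paper's $X$, $\wcomp{m}X$, $Y$), and the same bound $B\le(\wcomp{1}-\wcomp{m})T$. The only cosmetic difference is that you finish by cross-multiplying, whereas the paper uses the monotonicity of $x\mapsto x/(\wcomp{m}X+x)$.
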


\begin{proof}
	Fix an election \election, a metric $d\cons\prof$, 
	and a shared weight vector $\wvec \in \widom$.
	Let
	$\sigma=(c_1,\ldots,c_m)$ be the ranking returned by Recursive $f$.
	Index the candidates so~that
	\[
		\cost(c_1^*)
		\leq\cdots\leq
		\cost(c_m^*).
	\]
	Since $\wcomp{1}\geq\cdots\geq\wcomp{m}$,
	the ranking $\sigma^*:=(c_1^*,\ldots,c_m^*)$ is optimal under $\wvec$.
	When position~$j$ is filled, at least one of
	$c_1^*,\ldots,c_j^*$ remains available.
	Hence, the guarantee of $f$ on the restricted election gives
	\[
		\cost(c_j)
		\leq\beta\min_{c\in A_j}\cost(c)
		\leq\beta\,\cost(c_j^*).
	\]

	Define 
	\[
	\begin{aligned}
	  X&:=\sum_{c\in\C}\cost(c),\\
	  Y&:=\sum_{j=1}^m
	    \bigl(\wcomp{j}-\wcomp{m}\bigr)\cost(c_j^*).
	\end{aligned}
	\]
	For every position~$j$, write
	$\wcomp{j}=\wcomp{m}+(\wcomp{j}-\wcomp{m})$.
	The first term assigns the same weight $\wcomp{m}$ to every position.
	Since every ranking contains each candidate exactly once,
	\[
		\sum_{j=1}^m\wcomp{m}\cost(\sigma_j)
		=\wcomp{m}\sum_{c\in\C}\cost(c)
		=\wcomp{m}X.
	\]
	Therefore,
	\begin{align*}
		\cost(\sigma)
		&\leq\wcomp{m}X+\beta Y,\\
		\cost(\sigma^*)
		&=\wcomp{m}X+Y.
	\end{align*}
	Moreover, monotonicity gives
	$0\leq\wcomp{j}-\wcomp{m}\leq\wcomp{1}-\wcomp{m}$ for every position~$j$.
	Since $\sigma^*$ is a complete ranking, $c_1^*,\ldots,c_m^*$ contain
	every candidate exactly once.
	Therefore, we obtain
	\begin{align*}
		0\leq Y
		&=\sum_{j=1}^m
			\bigl(\wcomp{j}-\wcomp{m}\bigr)\cost(c_j^*)\\
		&\leq
			\bigl(\wcomp{1}-\wcomp{m}\bigr)
			\sum_{j=1}^m\cost(c_j^*)\\
		&=\bigl(\wcomp{1}-\wcomp{m}\bigr)
			\sum_{c\in\C}\cost(c)\\
		&=\bigl(\wcomp{1}-\wcomp{m}\bigr)X
		=\wcomp{1}\range{\wvec}X.
	\end{align*}
	
	If the optimal cost is zero, then
	$\wcomp{m}X+Y=0$, and the preceding bound
	$\cost(\sigma)\leq\wcomp{m}X+\beta Y$
	shows that the cost of $\sigma$ is also zero.
	Otherwise, $\wcomp{m}X+Y>0$.
	Since the function
	$x\mapsto x/(\wcomp{m}X+x)$ is nondecreasing for $x\geq0$, the bound on $Y$ gives
	\begin{align*}
		\frac{Y}{\wcomp{m}X+Y}
		&\leq
		\frac{(\wcomp{1}-\wcomp{m})X}
		     {\wcomp{m}X+(\wcomp{1}-\wcomp{m})X}\\
		&=
		\frac{\wcomp{1}-\wcomp{m}}{\wcomp{1}}
		=\range{\wvec}.
	\end{align*}
	Consequently,
	\begin{align*}
		\distid(\text{Recursive } f) &= \frac{\cost(\sigma)}{\cost(\sigma^*)} \\
		&\leq
		\frac{\wcomp{m}X+\beta Y}
		     {\wcomp{m}X+Y}\\
		&=1+(\beta-1)
			\frac{Y}{\wcomp{m}X+Y}\\
		&\leq1+(\beta-1)\range{\wvec},
	\end{align*}
	which completes the proof.
\end{proof}


The single-winner vector witnesses the hardest case: 
its~normalized range is~$1$, and Recursive $f$ has the same distortion as~$f$.
As the vector becomes flatter, the guarantee improves, reaching~$1$ for a uniform vector.

We next give a range-dependent lower bound that applies to every social welfare function.

\begin{theorem}\label{thm:identical-lb}
	For every social welfare function $F$ and \linebreak every $r \in [0, 1]$, 
	there exists an election $\elec$ and an identical weight profile $\wvec \in \widom$ with $\range{\wvec} = r$, under which the distortion is at least 
	$\dist_{\elec, \wvec}(F(\elec)) \ge \frac{4-r}{4-3r} = 1+\frac{2r}{4-3r}$.
\end{theorem}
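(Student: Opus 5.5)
The bound has the shape of the classical single-winner lower bound of~$3$ with the second position partially switching the roles of the two candidates. I therefore plan to reuse the standard two-candidate instance with a two-entry shared weight vector whose last entry is tuned to~$r$. Take $m=2$, candidates $a,b$, and an even number $n$ of voters. Half of the voters rank $a\succ b$ and the other half rank $b\succ a$. Use the shared weight vector $\wvec=(1,1-r)$. It is monotone and non-negative for $r\in[0,1]$, and $\range{\wvec}=(1-(1-r))/1=r$, as required.

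The profile is symmetric under swapping $a$ and $b$ together with the two voter groups. So whatever ranking $F(\elec)$ is, we may rename candidates and assume $F(\elec)=(a,b)$. The metric may then be chosen adversarially against~$a$. Place $a$ at $0$ and $b$ at $2$ on the real line. Put every voter who ranks $a\succ b$ at the midpoint $1$; they are equidistant from both candidates, so $d$ is consistent with $a\succeq_v b$. Put every voter who ranks $b\succ a$ at $2$, co-located with $b$. Then
\[
  \cost(a)=\tfrac n2\cdot 1+\tfrac n2\cdot 2=\tfrac{3n}{2},
  \qquad
  \cost(b)=\tfrac n2\cdot 1+0=\tfrac n2 .
\]

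With identical weights, $\cost(\sigma)=\cost(\sigma_1)+(1-r)\cost(\sigma_2)$. The output therefore has cost $\tfrac{3n}{2}+(1-r)\tfrac n2=\tfrac n2(4-r)$. The ranking $(b,a)$ has cost $\tfrac n2+(1-r)\tfrac{3n}{2}=\tfrac n2(4-3r)$, and it is optimal because it places the cheaper candidate first. The ratio is $\frac{4-r}{4-3r}=1+\frac{2r}{4-3r}$, so $\dist_{\elec,\wvec}(F(\elec))$ is at least this value.

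The only delicate points are the symmetry step and consistency. For the symmetry step, the adversary must be allowed to pick which of the two mirror-image metrics to use after seeing $F$'s output, which the definition of distortion permits. For consistency, the tie at the midpoint must be allowed, which holds because consistency only requires $d(v,a)\le d(v,b)$ when $a\succeq_v b$. If a strict inequality were preferred, I would move the $a$-voters by $\varepsilon$ toward $a$ and let $\varepsilon\to0$, which yields the same supremum.
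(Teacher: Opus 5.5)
Your proof is correct and is essentially the paper's argument: the same shared vector $(1,1-r)$, the same line metric with $a$ at $0$, the $a$-preferring voters at the midpoint $1$, and the $b$-preferring voters co-located with $b$ at $2$, handling the case $F(\elec)=(b,a)$ by the mirror-image metric. The only difference is that the paper uses a single voter in each group where you use $n/2$, which scales both costs by $n/2$ and leaves the ratio $\frac{4-r}{4-3r}$ unchanged.
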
 

\begin{proof}
	Consider two voters $u,v$ and two candidates $a,b$, with
	$a\succ_u b$ and $b\succ_v a$, and take the shared vector $\wvec=(1,1-r)$.
	Suppose that $F$ returns $(a,b)$; the other case is symmetric.
	Place $a$ at~$0$, voter $u$ at~$1$, and voter $v$ and candidate $b$ at~$2$ on the real line.
	This metric is consistent and gives $\cost(a\cond d)=3$ and $\cost(b\cond d)=1$.
	Hence, the returned ranking costs $4-r$, whereas the optimal ranking $(b,a)$ costs $4-3r$.
	Their ratio is $(4-r)/(4-3r)$, as claimed. 
\end{proof}
\pagebreak

The lower bound applies to every social welfare function, 
whereas \cref{thm:identical-ub} with $\beta=3$ gives the upper bound $1+2r$ for Recursive \pluveto.
The two bounds coincide at~$1$ when $r=0$ and at~$3$ when $r=1$;
and for intermediate ranges, their additive gap is
\[
  \left(1+2r\right)-\frac{4-r}{4-3r}
  =
  \frac{6r(1-r)}{4-3r}
  \leq\frac{2}{3},
\]
with equality at $r=2/3$.
Thus, the upper bound is exact at both extremes and is no more than
$2/3$ above the universal lower bound in \cref{thm:identical-lb}.

\begin{corollary}\label{cor:identical-pv}
	Recursive \pluveto has optimal distortion~$3$ under identical weight profiles.
	More precisely, for every election $\elec$ 
	and identical weight profile ${\wvec \in \widom}$,
	$F=$~Recursive \pluveto has distortion at most
	$\dist_{\elec,\wmat} \bigl(F(\elec)\bigr) \leq 1+2\range{\wvec}$.
\end{corollary}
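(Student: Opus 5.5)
This corollary should follow by instantiating \cref{thm:identical-ub} with $f=$ \pluveto and then pairing the result with the known single-winner lower bound. The plan has three steps.

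\textbf{Step 1: the distortion of the base rule.} Record that \pluveto, viewed as a social choice function, has distortion $\beta=3$. This is the specialization of \fracveto to plurality weights $(\pvec^{\uni},\qvec^{\plu})$ described in the Single-Winner Review. If a self-contained argument is preferred, it can be derived from Lemma~\ref{lem:metric_point} and Lemma~\ref{lem:top-cost} with $m$ positions replaced by a single position of weight $(1,\dots,1)$. Those lemmas give $\cost(c)\le 2\cost(x)+\sum_v d(v,\top(v))\le 3\cost(c^*)$ with $x=c^*$, since $d(v,\top(v))\le d(v,c^*)$.

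A small point needs checking here. The recursive rule applies \pluveto to restricted elections on $A_j$. The guarantee $\cost(c_j)\le 3\min_{c\in A_j}\cost(c)$ then holds because each restricted election is itself an election with a consistent metric (the same $d$ restricted to $\V\cup A_j$). This is exactly the form used in the proof of \cref{thm:identical-ub}.

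\textbf{Step 2: the upper bound.} Plugging $\beta=3$ into \cref{thm:identical-ub} gives, for every election and every shared $\wvec$ with $\wcomp1>0$,
\[
\dist_{\elec,\wmat}\bigl(F(\elec)\bigr)\le 1+2\range{\wvec}\le 3 .
\]
The degenerate case $\wvec=0$ must be handled separately. There every ranking has cost $0$, so the ratio is trivially at most $1$ under the convention $0/0$ used throughout.

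\textbf{Step 3: optimality.} The shared vector $(1,0,\dots,0)\in\widom$ reduces the objective to single-winner voting. The lower bound of $3$ \cite{anshelevich:bhardwaj:elkind:postl:skowron} therefore applies to every social welfare function in this regime. Alternatively, \cref{thm:identical-lb} with $r=1$ gives exactly $3$. Hence $3$ is optimal for $\distid$.

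I expect no real obstacle. The only care points are confirming that the hypothesis ``distortion at most $\beta$'' of \cref{thm:identical-ub} is met by \pluveto as a plain unweighted rule, which needs no clones since all weights are identical, and handling the zero-weight edge case.
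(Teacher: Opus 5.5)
Your proposal is correct and follows the paper's route. The paper obtains the upper bound by plugging $\beta=3$ for \pluveto into \cref{thm:identical-ub}, and gets optimality from the single-winner lower bound, equivalently \cref{thm:identical-lb} at $r=1$, which gives $(4-1)/(4-3)=3$. Your extra care about the restricted elections and the excluded case $\wcomp{1}=0$ is harmless; the paper handles that case implicitly by assuming $\wcomp{1}>0$.
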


\section{Normalized Weights}\label{sec:normalized-weights}

We now consider the third information regime, 
in which the weight profile remains hidden and may vary across voters, but every voter's
weight vector is normalized.
We consider the unit-sum and unit-top domains defined in the preliminaries.
In contrast to identical weights, hidden heterogeneity makes constant
distortion impossible under either normalization.

\begin{theorem}\label{thm:normalized-lb}
  $\distsum(F)\geq\frac{m}{2}$ and $\disttop(F)\geq\frac{m}{4}$ 
  for every social welfare function $F$.
\end{theorem}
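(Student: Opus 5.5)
The plan is an adversarial construction in the style of the two-voter example from the preliminaries. The profile $\prof$ is fixed first, then $F$ commits to $\sigma=F(\elec)$, and only afterwards do we choose a consistent metric $d\cons\prof$ and a normalized weight profile. The lever is that $F$ never observes the weights. A voter whose vector in $\wsumdom$ (or $\wtopdom$) puts all its importance on the first position, namely $(1,0,\ldots,0)$, is indistinguishable from one that spreads importance uniformly, namely $(1/m,\ldots,1/m)$ or $(1,\ldots,1)$. Some candidates are pushed to late positions by $F$, and we will make exactly those candidates the ones that the concentrated voters need early.

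\textbf{Profile and metric.} I would use a symmetric profile with one voter (or group) $v_c$ per candidate $c$, with $\top(v_c)=c$, so that by symmetry $F$ must place some candidate at position~$1$ and some candidate $c^\dagger$ late. For the metric I would start from a ``star/discrete'' metric. Each $v_c$ is co-located with $c$ and is at distance~$1$ from every other candidate. Consistency is automatic because all non-top candidates are tied. The remaining distances are then tuned so that voters whose weights are spread out pay almost nothing under every ranking.

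\textbf{Weights and the target bounds.} Voters whose top candidates $F$ placed early receive spread weights. The voter of a late candidate receives the concentrated weight $(1,0,\ldots,0)$. Under $F$, this concentrated voter pays about~$1$ per unit mass. The optimum instead moves its top to the front and pays only the diffuse cost of the others. For unit-sum, that diffuse cost is of order $2/m$: roughly two positions at weight $1/m$ are displaced. This gives the ratio $m/2$. For unit-top, the spread vectors have entries~$1$ rather than $1/m$, so the diffuse voters pay more. I expect this to lose a further factor~$2$ and give $m/4$, for example by using about $m/2$ concentrated voters against $m/2$ spread ones, or by using a two-level vector $(1,\ldots,1,0,\ldots,0)$. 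The computation is then two sums: $\cost(\sigma\cond d,\wmat)$ and $\cost$ of the ranking that moves the concentrated voters' tops to the front. A sanity check is that the same instance with $m=2$ should reduce to the unboundedness example up to normalization.

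\textbf{Main obstacle.} The delicate step is making the optimal cost genuinely $\Theta(1/m)$ times $F$'s cost. With a pure discrete metric, every spread voter pays $(m-1)/m$ regardless of the ranking, and this swamps the concentrated voter's contribution, so the ratio collapses to a constant. The spread voters' distances must therefore be nearly zero to all candidates except those relevant to the concentrated voter. This has to be achieved while keeping the triangle inequality, in particular $d(v_c,x)\le d(v_c,c)+d(c,x)$, from forcing the concentrated voter's penalty down to the same small scale. My first attempt is to place the spread voters at a common hub point at distance $\varepsilon$ from all candidates except one far candidate. I would then use non-line (graph) metrics, so that candidates can be mutually far apart while spread voters stay close to them. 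If that fails, the fallback is to take many copies of the concentrated voter, which makes its term dominate while the hub voters remain a single $\varepsilon$-cost group.
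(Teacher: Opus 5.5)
\textbf{There is a genuine gap: you never fix a concrete instance, and the devices you sketch cannot produce a $\Theta(m)$ ratio.}

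The first problem is consistency. The metric is chosen after $\sigma$, but it must be consistent with rankings fixed before $\sigma$, and you specify only the tops. Your hub metric requires every hub voter to rank the far candidate last. More generally, any set of candidates co-located with a voter must be a prefix of that voter's ranking. Such structure has to be built into the profile in advance; it cannot be tuned to whichever candidate $F$ happens to place late.

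The second problem persists even if you ignore consistency. Your framing uses a top-only victim whose top $c^\dagger$ is late, and the unit-sum constraint defeats it, because monotonicity forces $\wcomp{v}{1}\geq 1/m$ for every voter. Suppose $c^\dagger$ is the far candidate at distance $L$ from the hub. Any ranking with $c^\dagger$ first charges each of the $m-1$ hub voters about $L/m$ or more, and any other ranking charges the victim about $L$. So the optimum is $\Omega(L)$, while $\sigma$ costs only $O(L)$ when $c^\dagger$ is late, and the ratio is $O(1)$. If the hub is instead within $\varepsilon$ of $c^\dagger$, the triangle inequality through the hub caps the victim's loss at $O(\varepsilon)$, unless $\sigma_1$ is the far candidate, which you cannot force. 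The fallback of many copies of the victim fails for the first reason. In a profile fixed before $\sigma$ and symmetric across candidates, the voters with top $c^\dagger$ are only a $1/m$ fraction. Moreover, with a top-only victim only $\sigma_1$ matters, so lateness plays no role.

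\textbf{The missing idea is a fixed bipartition.} Split $\C=X\cup Y$ into halves of size $h=m/2$ and use two voters $u,v$. Voter $u$ ranks all of $X$ above $Y$ and $v$ the reverse; $u$ is co-located with $X$ and $v$ with $Y$, at distance $1$ apart. Only the assignment of weights depends on $\sigma$.

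For unit-sum, suppose $\sigma_1\in Y$. Give $u$ the top-only vector and give $v$ weight $1/h$ on the first $h$ positions only. Spreading over all $m$ positions would instead charge $v$ for the $X$-candidates in the tail. Then $\sigma$ costs at least $1$, while one $X$-candidate followed by $h-1$ $Y$-candidates costs $1/h$. The factor $2/m$ comes from one displaced position at weight $1/h$, not from two displaced positions at weight $1/m$.

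For unit-top the roles must flip. Since $\wcomp{v}{1}=1$ for everyone, in this instance giving up position $1$ for a top-only victim costs the other voter a full unit, so that attack gains nothing. Instead, by pigeonhole one side, say $Y$, fills at least $h/2$ of the first $h$ positions of $\sigma$. Give $u$ weight $1$ on the first $h$ positions and $v$ the top-only vector. Then $\sigma$ costs at least $h/2$, while placing $X$ first costs $1$, giving ratio $m/4$. You expected unit-top to be unit-sum losing a factor $2$ to costlier diffuse voters; that misses both this role reversal and the pigeonhole step.
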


\begin{proof}
	Let $h:=m/2$, and partition the candidates into sets $X$ and $Y$ of
	size~$h$.
	Consider two voters $u$ and $v$, where $u$ ranks every candidate in
	$X$ above every candidate in $Y$, while $v$ has the reverse
	preference.
	Place $u$ and all candidates in $X$ at one point, and $v$ and all
	candidates in $Y$ at another point at distance~$1$.
	This metric is consistent with the rankings.
	
	Let $\sigma:=F(\elec)$.
	We first consider unit-sum weights. \linebreak
	Suppose that $\sigma_1\in Y$.
	Give $u$ the top-only vector and give $v$ weight $1/h$ on each of
	the first $h$ positions and 0 thereafter.
	The cost of $\sigma$ is at least~$1$, whereas a ranking that places
	one candidate from $X$ first and $h-1$ candidates from $Y$ next has
	cost $1/h$.
	Thus, the distortion is at least $h=m/2$.
	If $\sigma_1\in X$, the same argument holds after exchanging $u$ and $v$.
	
	For unit-top weights, consider the first $h$ positions of $\sigma$.
	Suppose that at least $h/2$ of these positions contain candidates from $Y$.
	Give $u$ weight~$1$ on the first $h$ positions and zero thereafter,
	and give $v$ the top-only vector.
	The returned ranking costs at least $h/2$, whereas a ranking that
	places all candidates in $X$ first costs~$1$.
	The ratio is therefore at least $h/2=m/4$.
	If fewer than $h/2$ candidates are from $Y$, the~symmetric argument
	applies with the roles of $X$ and $Y$ exchanged.
\end{proof}


We now turn to upper bounds, and show that
Recursive \linebreak \pluveto achieves distortion $O(m)$ under both normalizations.

\begin{theorem} \label{thm:normalized-ub}
	Recursive \pluveto has distortion at most $8m+3$ under unit-sum weight profiles
	and at most $4m+1$ under unit-top weight profiles.
\end{theorem}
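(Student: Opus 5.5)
The plan is to compare every position of the returned ranking $\sigma=(c_1,\ldots,c_m)$ with a single benchmark, namely an optimal single-winner candidate $c^\ast\in\argmin_{c\in\C}\cost(c)$. The linear factor will then come from summing $m$ per-position bounds. I start with the unit-top case $\wtopdom$ and derive unit-sum from it.

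\emph{Lower bounds on the optimum.} Under unit-top weights, every voter has $\wcomp{v}{1}=1$. So any ranking pays at least $\sum_v d(v,\sigma^\ast_1)\ge\cost(c^\ast)$, which gives $\cost(\sigma^\ast)\ge\cost(c^\ast)$. Under unit-sum weights, monotonicity gives $\wcomp{v}{1}\ge 1/m$. Hence $\cost(\sigma^\ast)\ge\sum_v\wcomp{v}{1}d(v,\sigma^\ast_1)$. I intend to combine this with the identity $\sum_j\wcomp{v}{j}d(v,\sigma^\ast_j)\ge d_v^{(1)}\sum_j \wcomp vj = d_v^{(1)}$ to obtain $\cost(\sigma^\ast)\ge\sum_v d(v,\top(v))$. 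A comparison against $c^\ast$ will then lose roughly another factor, which I expect to account for the doubling from $4m+1$ to $8m+3$.

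\emph{Upper bound on the rule's cost.} Since weights are at most $\wcomp{v}{1}$, I have $\cost(\sigma)\le\sum_j\sum_v\wcomp v1 d(v,c_j)$. For unit-top this is at most $\sum_j\cost(c_j)$, using unweighted social costs. At each position, Recursive \pluveto{} is \fracveto{} with uniform $\pvec$ on $A_j$. So the argument of Lemma~\ref{lem:metric_point}, with uniform weights and metric point $x=c^\ast$ (valid even if $c^\ast$ has been removed), gives
\[
\cost(c_j)\le 2\cost(c^\ast)+\sum_{v}d(v,\top_j(v)).
\]
Summing over $j$ yields $2m\cost(c^\ast)$ plus the residual term $\sum_j\sum_v d(v,\top_j(v))$.

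\emph{Main obstacle.} The hard step is controlling $\sum_v d(v,\top_j(v))$ by $O(1)\cdot\cost(c^\ast)$ for each $j$. The bound via Lemma~\ref{lem:top-cost} uses the full weights, which are unknown here, and $d_v^{(j)}$ can be far larger than $d(v,c^\ast)$. The first thing I would try is to avoid $\top_j$ entirely and compare $c_j$ with $c^\ast$ directly through the plurality matching. For matched pairs $v\mapsto M(v)$, I have $d(v,c_j)\le d(v,\top_j(M(v)))$, and I would bound $\top_j(M(v))$ by routing through $c^\ast$ and $M(v)$. Where $c^\ast$ is still available, $d(M(v),\top_j(M(v)))\le d(M(v),c^\ast)$. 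Where it is not, I would charge to the earlier position at which $c^\ast$ was selected. The aim is $\cost(c_j)\le 4\cost(c^\ast)$ per position in the unit-top case. Summing over the $m$ positions then gives the $4m$ term, and the $+1$ comes from $\cost(c^\ast)\le\cost(\sigma^\ast)$. For unit-sum, I would redo this comparison with weights scaled by $\wcomp v1\ge 1/m$ and use the identity above, which should yield $8m+3$. If the direct charging fails, my fallback is to compare with $c_1$, for which the standard single-winner bound $\cost(c_1)\le 3\cost(c^\ast)$ holds, and to use $d(v,c_j)\le d(v,c_1)+d(c_1,c_j)$. This would still give an $O(m)$ bound, though possibly with worse constants.
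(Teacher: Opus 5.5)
Your plan fails at its first step, and the ``main obstacle'' you identify is not a hard step but a false statement. Bounding $w_{vj}\le w_{v1}$ turns $\cost(\sigma)$ into $\sum_j\cost(c_j)=\sum_{c\in\C}\cost(c)$ under unit-top, and into $\sum_v w_{v1}\sum_{c\in\C}d(v,c)$ under unit-sum. Neither quantity depends on the ranking, and neither can be bounded by $O(m)\cdot\cost(c^\ast)$.

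Take $n$ voters at $0$, candidate $a$ at $\epsilon$, candidate $b$ at $1$, and $\wrow{v}=(1,0)$ for all $v$; this profile lies in both $\wtopdom$ and $\wsumdom$. Then $\cost(\sigma^\ast)=\cost(c^\ast)=n\epsilon$. But $b$ must occupy some position and $\cost(b)=n$, so the per-position target $\cost(c_j)\le 4\cost(c^\ast)$ is false for $\epsilon<1/4$, and no charging scheme can rescue it. The rule itself is optimal on this instance; only your intermediate bound is lossy. The same example at $j=2$, after $c^\ast=a$ is removed, shows that $\sum_v d(v,\top_j(v))$ is not $O(1)\cdot\cost(c^\ast)$. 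Your fallback fails as well, since $n\,d(c_1,c_j)=n(1-\epsilon)$ is not controlled by $\cost(c^\ast)$. The unit-sum idea of rescaling by $w_{v1}\ge 1/m$ inherits the same defect.

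The missing idea is to keep the weight $w_{vj}$ on the residual term. Lemma~\ref{lem:top-cost} is available for exactly this: it is a rule-independent analysis statement, so it applies to the true weight profile even though the rule never sees it.

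In the paper's unit-top argument, the plurality-matching bijection $M_j$ gives, for $u=M_j^{-1}(v)$,
$d(v,c_j)\le d(v,u)+d(u,\top_j(v))\le 2d(v,u)+d(v,\top_j(v))$.
Multiply by $w_{vj}$ and sum. The last term totals at most $\cost(\sigma^\ast)$ by Lemma~\ref{lem:top-cost}. The linear loss is confined to $D=\sum_j\sum_v w_{vj}\,d(v,M_j^{-1}(v))$. Route this through $c_1^\ast$ and use $w_{vj}\le 1$, $\sum_j w_{vj}\le m$, and the bijectivity of $M_j$. This gives $D\le 2m\cost(c_1^\ast)\le 2m\cost(\sigma^\ast)$, hence the bound $4m+1$.

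For unit-sum, the paper uses a separate averaging argument. Set $\bar w_j=\frac1n\sum_v w_{vj}$ and $\Pi=\frac1n\sum_{u,v}d(u,v)$. Every ranking $\pi$ satisfies $\cost(\pi)\le\barcost(\pi)+\Pi$ and $\barcost(\pi)\le\cost(\pi)+\Pi$. Theorem~\ref{thm:identical-ub} with $\beta=3$ applies to the monotone vector $(\bar w_1,\ldots,\bar w_m)$. Finally, $\Pi\le 2\cost(c_1^\ast)\le 2m\cost(\sigma^\ast)$ because $w_{v1}\ge 1/m$. Together these give $\cost(\sigma)\le 3\cost(\sigma^\ast)+4\Pi\le(8m+3)\cost(\sigma^\ast)$.
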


\begin{proof}
	We first consider the unit-sum domain.
	Fix an election \linebreak \election, 
    a metric $d\cons\prof$, and a profile
	$\wmat\in\wsumdom$.
	Let $\sigma$ be the ranking returned by Recursive \pluveto, \linebreak
    let $\sigma^*=(c_1^*,\ldots,c_m^*)$ be an optimal ranking. 
	Define $\bar w_j:=\frac1n\sum_{v\in\V}\wcomp{v}{j}$ 
    and ${\barcost(\pi):=\sum_{j=1}^m\bar w_j \cost(\pi_j)}$,
	and let $\Pi:=\frac1n\sum_{u,v\in\V}d(u,v)$.

    For every ranking $\pi\in \sdom{\C}$, the triangle inequality gives, for every
    pair of voters $u,v\in\V$ and every position $j$,
    \[
    d(v,\pi_j)\le d(v,u)+d(u,\pi_j).
    \]
    Since this inequality holds for every $u$, averaging over all
    $u\in\V$ yields
    \[
    d(v,\pi_j)
    \le
    \frac1n\sum_{u\in\V} d(v,u)
    +
    \frac1n\sum_{u\in\V} d(u,\pi_j).
    \]
    Multiplying both sides by $\wcomp{v}{j}$ and summing over all voters~$v$ 
    and positions $j$, we obtain
    \begin{align*}
    \cost(\pi)
    &=
    \sum_{j=1}^m\sum_{v\in\V}
    \wcomp{v}{j}d(v,\pi_j)\\
    &\le
    \frac1n
    \sum_{j=1}^m
    \sum_{u,v\in\V}
    \wcomp{v}{j}d(v,u)
    +
    \frac1n
    \sum_{j=1}^m
    \sum_{u,v\in\V}
    \wcomp{v}{j}d(u,\pi_j).
    \end{align*}
    For the first term, exchanging the order of summation and using the
    unit-sum normalization,
    \[
    \sum_{j=1}^m \wcomp{v}{j}=1
    \qquad\text{for every }v\in\V,
    \]
    gives
    \[
    \frac1n
    \sum_{u,v\in\V}
    d(u,v)
    \sum_{j=1}^m\wcomp{v}{j}
    =
    \frac1n
    \sum_{u,v\in\V}d(u,v)
    =\Pi.
    \]
    For the second term, exchanging the order of summation gives 
    \begin{align*}
    \frac1n
    \sum_{j=1}^m
    \sum_{u,v\in\V}
    \wcomp{v}{j}d(u,\pi_j)
    &=
    \sum_{j=1}^m
    \left(
    \frac1n\sum_{v\in\V}\wcomp{v}{j}
    \right)
    \sum_{u\in\V}d(u,\pi_j)\\
    &=
    \sum_{j=1}^m
    \bar w_j
    \sum_{u\in\V}d(u,\pi_j)
    =
    \barcost(\pi).
    \end{align*}
    Therefore,
    \[
    \cost(\pi)\le \Pi+\barcost(\pi).
    \]
    
	Averaging the reverse inequality
	$d(u,\pi_j)\leq d(u,v)+d(v,\pi_j)$ similarly gives
	$\barcost(\pi)\leq\cost(\pi)+\Pi$.
    
	Due to monotonicity and unit-sum normalization, we have
	${\wcomp{v}{1}\geq 1/m}$.
	Consequently,
	$\cost(\sigma^*)
	\geq\frac1m\cost(c_1^*)$.
	The triangle inequality therefore gives
	\begin{align*}
		\Pi
		&\leq\frac1n\sum_{u,v\in\V}
			\bigl(d(u,c_1^*)+d(v,c_1^*)\bigr)\\
		&=2\cost(c_1^*)\\
		&\leq2m\cost(\sigma^*).
	\end{align*}

	Since the average vector $(\bar w_1,\ldots,\bar w_m)$ is also monotone,
	\cref{thm:identical-ub} gives
	$\barcost(\sigma)\leq3\min_{\pi\in\sdom{\C}}\barcost(\pi)$.
	Thus,
	\[
		\cost(\sigma)
		\leq3\barcost(\sigma^*)+\Pi
		\leq3\cost(\sigma^*)+4\Pi
		\leq(8m+3)\cost(\sigma^*).
	\]

	We next consider the unit-top domain.
	Fix an election \election, a metric $d\cons\prof$, and a profile
	$\wmat\in\wtopdom$.
	Let $\sigma=(c_1,\ldots,c_m)$ be the returned ranking and let
	$\sigma^*=(c_1^*,\ldots,c_m^*)$ be an optimal ranking.
	
	By definition, there exists a bijection $M_j:\V\to\V$, for every position $j$,
	such that $c_j\succeq_v\top_j(M_j(v))$ for all $v\in\V$.
	Fix $v\in\V$ and set $u:=M_j^{-1}(v)$.
	Since $d\cons\prof$, we obtain
	\begin{align*}
		d(v,c_j)
		&\leq d(v,u)+d(u,c_j)\\
		&\leq d(v,u)+d(u,\top_j(v))\\
		&\leq 2d(v,u)+d(v,\top_j(v)).
	\end{align*}
	
	Let $D:=\sum_{j=1}^m\sum_{v\in\V}
	\wcomp{v}{j}d\bigl(v,M_j^{-1}(v)\bigr)$.
	Multiplying the above inequality by $\wcomp{v}{j}$ and summing
	over all voters and positions, we obtain
	\begin{align*}
		\cost(\sigma)
		&\leq
		  2D+\sum_{j=1}^m\sum_{v\in\V}
		  \wcomp{v}{j}d(v,\top_j(v))\\
		&\leq
		  2D+\cost(\sigma^*),
	\end{align*}
	where the second inequality follows from Lemma~\ref{lem:top-cost}.
	
	It remains to bound $D$.
	Unit-top normalization implies $\wcomp{v}{j}\leq1$ and
	$\sum_j\wcomp{v}{j}\leq m$.
	Therefore,
	\begin{align*}
		D
		&\leq \sum_{v\in\V}
			\left(\sum_{j=1}^m\wcomp{v}{j}\right)d(v,c_1^*)\\
		&\quad+\sum_{j=1}^m\sum_{v\in\V}
			\wcomp{v}{j}d\bigl(c_1^*,M_j^{-1}(v)\bigr)\\
		&\leq m\cost(c_1^*)
			+\sum_{j=1}^m\sum_{v\in\V}
			d\bigl(c_1^*,M_j^{-1}(v)\bigr)\\
		&=2m\cost(c_1^*),
	\end{align*}
	where the last equality follows because $M_j^{-1}$ is a permutation
	of $\V$ for every~$j$.
	Finally, since $\wcomp{v}{1}=1$ for every voter $v$, we have
	\[
		\cost(\sigma^*)
		\geq\cost(c_1^*).
	\]
	Hence,
	$D\leq2m\cost(\sigma^*)$,
	and consequently
	\[
		\cost(\sigma)
		\leq(4m+1)
	\cost(\sigma^*),
	\]
	completing the proof.
\end{proof}

Together with \cref{thm:normalized-lb}, these bounds determine the optimal distortion up to constant factors: it is $\Theta(m)$ under either normalization.

\pagebreak

\section{Conclusion}

We initiated the study of metric distortion for social welfare functions, 
extending the metric distortion framework from selecting a single winner to constructing complete rankings under position-weighted objectives. 
Our results show that the amount of information available about the positional weights fundamentally determines the achievable distortion: \linebreak
knowing the weights allows optimal constant distortion, identical hidden weights still admit strong guarantees, while heterogeneous hidden weights inevitably lead to linear distortion, even under natural normalizations.

These results leave several interesting directions for \linebreak future work. 
In particular, can the distortion of Recursive \copeland be improved from the current upper bound of~$7$ to the conjectured bound of $5$? 
More generally, it would be interesting to understand which single-winner distortion guarantees extend to recursive ranking rules, 
and to study metric distortion for social welfare functions under richer preference models, such as limited cardinal information or randomized social welfare functions.




\paragraph{Acknowledgments}

This work was supported in part by the Ministry of National Education of the Republic of T\"urkiye. 
OpenAI's GPT-5.5 and GPT-5.6 Sol were used to assist in developing some of the results presented in this paper. 
The authors reviewed and verified the resulting arguments and proofs and take full responsibility for the paper's content.

\bibliography{references,bibliography,publications}

\end{document}